\documentclass{article}
\usepackage{arxiv}

\usepackage[utf8]{inputenc} % allow utf-8 input
\usepackage[T1]{fontenc}    % use 8-bit T1 fonts
\usepackage{hyperref}       % hyperlinks
\usepackage{url}            % simple URL typesetting
\usepackage{amsfonts}       % blackboard math symbols
\usepackage{graphicx}
\usepackage{amsthm}
\usepackage{amsmath}
\usepackage{amssymb}
\usepackage{natbib}
\usepackage{caption}
\usepackage{subcaption}
\usepackage{enumerate}
\usepackage{color}
\usepackage{multirow}
\usepackage{placeins}
\usepackage{longtable}
\usepackage{algpseudocode}

\usepackage[ruled]{algorithm2e}
\SetKwInput{KwInput}{Input}     
\SetKwInput{KwOutput}{Output}
\SetKwBlock{stepone}{Step 1:}{}
\SetKwBlock{stepgrid}{Step 2: Grid Traversal}{}
\SetKwBlock{stepbisec}{Step 3: Bisection Search}{}
\SetKwBlock{intersectionSearch}{Intersection Search}{}
\SetKwBlock{init}{Initialization:}{}
\SetKwBlock{iter}{Iteration:}{}
\usepackage{setspace}

\usepackage{titlesec}
\usepackage[dvipsnames]{xcolor}

\usepackage{tikz}
\usetikzlibrary{patterns}
\usetikzlibrary{arrows}
\usetikzlibrary{decorations,shapes,shadings}
\usetikzlibrary{patterns}
\usetikzlibrary{decorations.pathreplacing}
\usetikzlibrary{3d,calc}

\newtheorem{Def}{Definition}
\newtheorem{lemma}{Lemma}
\newtheorem{theorem}{Theorem}

\renewcommand{\subset}{\subseteq}

\title{Principles in harmony:
Closed testing meets the partitioning principle for computational efficiency}
\author{
 Werner Brannath\thanks{Corresponding author: Werner Brannath} \\
  Institute for Statistics and\\
  Competence Center for  Clinical Trials Bremen\\
  University of Bremen \\
  \texttt{brannath@uni-bremen.de}\\
  \And
   Liane Kluge\\
  Competence Center for Clinical Trials Bremen\\
  University of Bremen \\
  \texttt{liane@uni-bremen.de}\\
}

\begin{document}
\maketitle
\begin{abstract}
We explore and utilize the algorithmic relationship between the closed testing principle for multiple tests with family-wise error rate (FWER) control and the partitioning principle for the construction of simultaneous confidence intervals. Starting with the simple observation that a multiple test with FWER control is formally equivalent to a one-sided simultaneous confidence interval for the vector of binary parameter indicating whether the null or alternative hypothesis is true, we show that the closed testing and partitioning principles follow the same computational approach. We will then utilise this relationship to extend concepts of consonance for closed tests to the partitioning principle, with the aim of deriving computationally feasible and efficient algorithms for the calculation of simultaneous confidence intervals. We will also utilize the relationship between closed testing and partitioning principle to extend common closed testing procedures to simultaneous confidence intervals, referencing the existing literature on informative simultaneous confidence intervals. The relationships and extensions will be illustrated by simple, instructive examples.
\end{abstract}

% keywords can be removed
\keywords{clinical trials\and consonance\and family-wise error rate\and informative simultaneous confidence intervals\and short-cut\and multiple hypothesis testing}

\section{Introduction}

The closed testing and partitioning principles are fundamental principles of statistical multiple inference. While the now 50 years old closed testing principle suggested by \citet{marcus1976closed} (see also the paper of \citealp{sonnemann1982}; and its english translation by \citealp{finner2008}) provides a general framework for the construction of multiple tests with family wise error rate control (FWER), the partitioning principle has been introduced by \citet{stefansson1988confidence} for the construction of simultaneous confidence intervals, in particular for those that are consistent with step-wise multiple tests. \citet{finner2002partitioning} formalized and generalized the partitioning principle and investigated its utility for the construction of multiple tests as well as selection and ranking procedures. 

In this paper we focus on the construction and efficient calculation of simultaneous confidence intervals (SCI) that are e.g.\ compatible or inspired by closed testing procedures. We start with the observation that any multiple test with control of the FWER can be viewed as a one-sided SCI for the multiple binary parameter that indicate which null or alternative hypotheses are true. Based on this observation, we can understand the partitioning principle for SCI as an extension of the closed testing principle. This then leads us to explore the question of how far the concept of ``consonance'' for closed testing procedures (see e.g.\ \citealp{hommel2007powerful}; \citealp{brannath2010shortcuts}) and the resulting short-cuts can be extended to the partitioning principle and the determination of computationally efficient algorithms for the calculation of the resulting SCI. To this end, we will introduce weaker and stronger concepts of consonance for the partitioning principle that are sufficient to derive feasible and efficient algorithms for the calculation of the resulting SCI. 

The remainder of this work is organized as follows. In Section~\ref{sec:MTasSCI} we briefly review multiple tests with FWER control as well as SCI, and illustrate their close formal relationship. In Section~\ref{sec:CTandPP} we review the closed testing and partitioning principles for testing multiple hypotheses and show their computational equivalence. In Section~\ref{sec:ccfpp} we consider the case of (only) discrete parameters and introduce a simple algorithm for the partioning principle that applies under a weak version of consonance and the assumption that the parameters are bounded from below (e.g.\ are non-negative) or have a lower bound that can be determined from the data initially. Section~\ref{sec:CCandAfCP} considers the case of continuous parameter and extends the in Section~\ref{sec:ccfpp} introduced weak version of consonance for discrete parameter to a version (called weak uniform-consonance) for continuous parameter that takes into account the uncountable nature of the parameter space. 
We also introduce in this section an algorithm in which weak uniform-consonance is verified in a stepwise manner, leading to a conservative approximation of the lower confidence bounds. The algorithm is shown to apply to a generalization of the weighted Holm procedure to continuos parameter. Since the algorithm is not always feasible or may come with significant computational efforts, we suggest in 
Section~\ref{sec:monWeightBonf} 
an alternative, more efficient algorithm for the determination of a conservative approximation of the SCI bounds for continuous parameter that works under more general assumptions as the ones made in previous work on informative SCI (see e.g. \citealp{schmidt2014informative, schmidt2015informative}, \citealp{informativeSCIBrannathKlugeScharpenberg} and \citealp{klugeBrannath2026GSD}). The paper concludes with a short summary and discussion where also future research topics are indicted. The more complex mathematical results and proofs are presented in the Appendix.

\section{Multiple tests as one-sided simultaneous confidence intervals}\label{sec:MTasSCI}
We assume $m$ variationally independent parameter $\theta_i\in\Theta_i$ for a non-empty $\Theta_i\subseteq \mathbb{R}$,  $i\in  I:=\{1,\ldots,m\}$, 
with the $m$ null hypotheses 
\begin{align}\label{eq:hyps}
    H_1\subset \Theta_1, \ldots, H_m \subset \Theta_m .
\end{align} 
Variational independence of the parameters means that every vector $\theta=(\theta_1,\ldots,\theta_m)\in \Theta:=\times_{i=1}^m \Theta_i\subseteq \mathbb{R}^m$ represents a possible and unique parameter constellation. This implies that $\cap_{i\in J} H_j\not=\emptyset$ for all $J\subseteq I$, a situation which is often denoted by `unrestricted' or `free' hypotheses. Note that (with a slight abuse of notation) we also understand each $\Theta_i$ and $H_i$ as subsets of $\Theta$. A multiple test for \eqref{eq:hyps} can be represented by the vector valued test decision function 
\begin{align*}\varphi=(\varphi_1,\ldots,\varphi_m)\in \{0,1\}^m
\quad\text{where}\quad \varphi_i= \left\{\begin{array}{ll} 0 & \text{ retain }  H_i  \\
1 & \text{ claim } K_i:=\Theta_i\setminus H_i.
\end{array} \right.
\end{align*}
Often, one aims on `strong control' of the `familywise error rate' (FWER) at level $\alpha$, which means that:  
$$\mathbb{P}_\theta\Big(\varphi_i=1 \text{ for at least one }i\in\{1,\ldots,m\}\text{ with }\theta_i\in H_i\Big) \le \alpha \quad\text{for all } \theta\in \cup_{i=1}^m H_i.$$

A generally more informative approach for multiple inference is to calculate a simultaneous confidence interval for $\theta\in\Theta$, e.g.\ an one-sided simultaneous interval
$\text{SCI}=\times_{i=1}^m [L_i,\infty)$ with lower bounds $L_i$ that satisfy
$\mathbb{P}_\theta\Big(L_i\le \theta_i \text{ for all }i\in\{1,\ldots,m\}\Big) \ge 1-\alpha\text{ for all }\theta\in\Theta,$
or equivalently
\begin{align}\label{eq:sci}
\mathbb{P}_\theta\Big(L_i> \theta_i \text{ for at least one }i\in\{1,\ldots,m\}\Big) \le \alpha \quad\text{for all }\theta\in\Theta.
\end{align}

It is interesting to note that a multiple test with FWER control can formally be considered as a lower one-sided confidence interval with coverage probability $\ge 1-\alpha$ for the (less informative) vector of parameters  
\begin{align*} \delta=(\delta_1,\ldots,\delta_m)\in \{0,1\}^m  \quad\text{where}\quad \delta_i= \left\{\begin{array}{ll} 0 & \text{ if } \theta_i\in H_i  \\
                               1 & \text{ if } \theta_i\in K_i,
\end{array} \right.\end{align*}
which indicate the true and false null hypotheses for the (unknown) true parameter $\theta$. The familywise error rate and its strong control can then be written in terms of this parameter as 
\begin{align}\label{eq:fwer_del}
\mathbb{P}_\delta\Big(\varphi_i> \delta_i \text{ for at least one }i\in\{1,\ldots,m\}\Big) \le \alpha \text{ for all }
\delta\in\{0,1\}^m.
\end{align}
Comparing this to \eqref{eq:sci} shows that the multiple test is equivalent to the SCI for $\delta$ with lower bounds $L_i=\varphi_i$. 

We should note here that the probability in \eqref{eq:fwer_del} indexed by $\delta$ stands for the
supremum of the same probabilities indexed by  $\theta$ (i.e.\ $\mathbb{P}_\theta$) for  $\theta\in \Theta_\delta:=(\cap_{i:\delta_i=0} H_i)\cap (\cap_{i:\delta_i=1} K_i)$.

\section{Closed testing and partitioning principle}\label{sec:CTandPP}
The closed testing principle  for $H_1, \ldots, H_m$ can also be phrased in terms of the binary parameter vector~$\delta$: A closed test requires test decision functions
$\phi_\delta$ for all $\delta \in \{0,1\}^m$ with the property that 
\begin{align}\label{eq:ec_clt}
\max_{\delta'\le \delta}\mathbb{P}_{\delta'}(\phi_{\delta}=1)\le \alpha,
\end{align}
whereby we set $\phi_\delta=0$ for the parameter $\delta=\mathbf{1}=(1,\ldots,1)$ that indicates $\theta\in\cap_{i=1}^m K_i$, and define $\phi_{\delta}=1$ if $\Theta_\delta=\emptyset$.
As in \eqref{eq:fwer_del}, we need to understand in \eqref{eq:ec_clt} the probabilities indexed by $\delta'$ as the supremum of the same probabilities over $\theta\in\Theta_{\delta'}$.
Because $\delta'\le \delta$ is equivalent to $\theta_i\in H_i$ for $\delta_i=0$ and $\theta_j\in \Theta_j=H_j\cup K_j$ (unspecified) for $\delta_j=1$, equation \eqref{eq:ec_clt} simply means that $\phi_\delta$ is a (local) level $\alpha$ test for the intersection hypothesis $H_\delta:=\cap_{i:\delta_i=0} H_i$. 

The individual null hypothesis $H_i$ can be represented by the set $\{\delta':\delta'_i=0\}\subseteq\{0,1\}^m$ 
and the multiplicity adjusted test decision function for $H_i$, which results from the closed testing principle with local level $\alpha$ tests $(\phi_\delta)_{\delta\in \{0,1\}^m}$, is given by 
\begin{align} \label{eq:ctp}
\varphi^{\text{closed}}_i = \min_{\delta':\delta_i'=0} \phi_{\delta'}.
\end{align}
The test decision $\varphi^{\text{closed}}_i=1$ means to reject $H_i$ if and only if all intersection hypothesis $H_{\delta'}\subseteq H_i$  are rejected with $\phi_{\delta'}=1$. The closed test can be extended to also test the intersection hypotheses  $H_\delta$, 
$\delta\in\{0,1\}^m$, namely by the test decision functions  $\varphi^{\text{closed}}_\delta=\min_{\delta'\le\delta} \phi_{\delta'}$ where $\delta'\le \delta$ is to be understood component wise, i.e.\ $\delta'_i\le \delta_i$ for all $i=1,\ldots,m$.

Let us turn now to the partition principle. 
According to \citet{finner2002partitioning}, the `(weak) partitioning principle' considers the natural partition of $\Theta$ which we have encoded by $\delta$ in Section~\ref{sec:MTasSCI}, namely $(\Theta_\delta)_{\delta\in\{0,1\}^m}$ with
$\Theta_\delta= (\cap_{i:\delta_i=0} H_i)\cap (\cap_{i:\delta_i=1} K_i)$. For the sake of notational simplicity, we have added the joint alternative $\Theta_{\mathbf{1}}=\cap_{i=1,\ldots,m} K_i$ to the natural partition of \citet{finner2002partitioning}. 
For the weak partitioning  principle, we have to define for each $\Theta_\delta$ a decision function $\psi_\delta$ with the property 
\begin{align}\label{eq:ec_wpp}
\mathbb{P}_{\delta}(\psi_{\delta}=1)\le \alpha,
\end{align}
whereby we set $\psi_{\mathbf{1}}=0$ and $\psi_\delta=1$ if $\Theta_\delta=\emptyset$. Again, we have to understand the rejection probabilities indexed by $\delta$ as suprema over $\theta \in \Theta_\delta$. The (weak) partitioning principle then uses 
for $H_i=\dot{\bigcup}_{\delta':\delta_i'=0}\Theta_{\delta'}$ the test decision function 
\begin{align}\label{eq:ptp}
    \varphi_i^{\text{part}}=\min_{\delta':\delta_i'=0}\psi_{\delta'},
\end{align}
which simply checks, whether all $\Theta_{\delta'}\not=\emptyset$ are rejected whose union yields the hypothesis $H_i$. Like for the closed test, the intersection $H_\delta$ can be tested with the decision function
$\varphi^{\text{part}}_\delta=\min_{\delta'\le\delta} \psi_{\delta'}$.

The formal identity of \eqref{eq:ctp} and \eqref{eq:ptp} asks for a common understanding of the partitioning and closed testing principles. This is achieved by understanding a multiple test (with strong FWER control) as SCI for 
$\delta$, as described in the previous section. Indeed, by \eqref{eq:ec_clt} and \eqref{eq:ec_wpp} we can understand the sets 
\begin{align*}
C_{\text{closed}}=\{\delta'\in\{0,1\}^m:\phi_{\delta'}=0\}\quad \text{and}\quad C_{\text{part}}=\{\delta'\in\{0,1\}^m:\psi_{\delta'}=0\}
\end{align*}
as confidence sets for the parameter $\delta\in\{0,1\}^m$. Given this, the application of the test decision functions in \eqref{eq:ctp} and \eqref{eq:ptp} can be interpreted as projections of $C_{\text{closed}}$ and $C_{\text{part}}$ on the
lower simultaneous intervals 
\begin{align*}
SCI_{\text{closed}}:=\times_{i=1}^m [\varphi_i^{\text{closed}},1]\quad \text{and}\quad 
SCI_{\text{part}}=\times_{i=1}^m [\varphi_i^{\text{part}},1],
\end{align*}
whereby the `projection' is formally defined as the smallest one-sided simultaneous interval that contains
the confidence sets $SCI_{\text{closed}}$ and $SCI_{\text{part}}$, respectively.

The relationship between the closed and partitioning principle is illustrated in Figure~\ref{fig:ClosedVSPP} for the case of two hypotheses: A rejection of $H_2$ with the closed test requires a rejection of the parameters $\delta=(0,0)$ and $\delta=(1,0)$ (red squares) via $\varphi_2^{\text{closed}}=\min(\phi_{(0,0)},\phi_{(1,0)})=1$. This is the case if and only if $C_{\text{closed}}\subseteq\{(0,1),(1,1)\}$. A rejection of $H_2$ with the partitioning principle requires equally the rejection of the parameters $\delta=(0,0)$ and $\delta=(1,0)$ (red squares) via $\varphi_2^{\text{part}}=\min(\psi_{(0,0)},\psi_{(1,0)})=1$ or equivalently $C_{\text{part}}\subseteq\{(0,1),(1,1)\}$. The local test $\phi_{(1,0)}$ of the closed test needs to control the type I error under $\delta\in\{(0,0),(1,0)\}$ (blue ellipse in left picture). The local test $\psi_{(1,0)}$ for the partitioning principle must control the type  I error only under $\delta=(1,0)$ (blue circle in right picture). The latter can lead to more efficient tests, while the `algorithmic' requirement of rejecting $(0,0)$ and $(1,0)$ for making an individual claim on $\theta_2$ is the same for both principles.

\begin{figure}[h]\scriptsize
\begin{minipage}{0.45\textwidth}
\begin{center}
    \begin{tikzpicture}[scale=0.75]
        \draw (0,0) node[draw=red, rectangle, fill=red, minimum size=4pt, inner sep=0pt]{} node[below left] {$(0,0)\hspace{.5em}$}; %{$\Theta_{00}=H_1\cap H_2$};
        \fill (0,1) circle (2pt) node[above left]  {$\delta=(0,1)\hspace{.5em}$}; %{$\Theta_{01}=H_1\cap K_2$}; 
        \draw (1,0) node[draw=red, rectangle, fill=red, minimum size=4pt, inner sep=0pt]{} node[below right]  {$\hspace{.5em}(1,0)$}; %{$\Theta_{10}=K_1\cap H_2$};
        \fill (1,1) circle (2pt) node[above right] {$\hspace{.5em}(1,1)$}; %{$\Theta_{11}=K_1\cap K_2$};
        \draw[red, thick] (0,0) ellipse (0.2 and 0.2);
        \draw[blue, thick] (0.5,0.0) ellipse (0.85 and 0.35);
        \draw[->] (-2.2,-0.75) -- (2.8,-0.75) node[right] {$\delta_1$};
        \draw (0,-0.75+0.1) -- (0,-0.75-0.1) node[below] {$0$};
        \draw (1,-0.75+0.1) -- (1,-0.75-0.1) node[below] {$1$};
        \draw[->] (-2.2,-0.75) -- (-2.2,2) node[above] {$\delta_2$};
        \draw (-2.2+0.1,0) -- (-2.2-0.1,0) node[left] {$0$};
        \draw (-2.2+0.1,1) -- (-2.2-0.1,1) node[left] {$1$};
    \end{tikzpicture}
    \end{center}
\end{minipage}
\hfill
\begin{minipage}{0.45\textwidth}\scriptsize
\begin{center}
    \begin{tikzpicture}[scale=0.75]
        \draw (0,0) node[draw=red, rectangle, fill=red, minimum size=4pt, inner sep=0pt]{} node[below left] {$(0,0)$}; %{$\Theta_{00}=H_1\cap H_2$};
        \draw[red, thick] (0,0) ellipse (0.2 and 0.2);
        %\fill (0,0) circle (2pt) node[below left] {$H_1\cap H_2=(0,0)$};
        \fill (0,1) circle (2pt) node[above left] {$(0,1)$}; %{$\Theta_{01}=H_1\cap K_2$};
        \draw (1,0) node[draw=red, rectangle, fill=red, minimum size=4pt, inner sep=0pt]{} node[below right] {$(1,0)$}; 
        \draw[blue, thick] (1,0) ellipse (0.2 and 0.2);%{$\Theta_{10}=K_1\cap H_2$};
        %\fill (1,0) circle (2pt) node[below right] {$(1,0)$};
        \fill (1,1) circle (2pt) node[above right] {$(1,1)$}; %{$\Theta_{11}=K_1\cap K_2$};
        %\draw[dashed, OliveGreen, thick] (0,0) -- (1,0);
        \draw[->] (-2,-0.75) -- (3,-0.75) node[right] {$\delta_1$};
        \draw (0,-0.75+0.1) -- (0,-0.75-0.1) node[below] {$0$};
        \draw (1,-0.75+0.1) -- (1,-0.75-0.1) node[below] {$1$};
        \draw[->] (-2,-0.75) -- (-2,2) node[above] {$\delta_2$};
        \draw (-2+0.1,0) -- (-2-0.1,0) node[left] {$0$};
        \draw (-2+0.1,1) -- (-2-0.1,1) node[left] {$1$};
       \draw[dashed, OliveGreen, thick] (0,0) -- (1,0);
    \end{tikzpicture}
    \end{center}
\end{minipage}
\caption{Visualization of the closed testing principle (left) and the partitioning principle (right) for the case of two hypotheses $H_1$ and $H_2$. \label{fig:ClosedVSPP}}
\end{figure}

Based on this common interpretation of the closed and partitioning principle, we may ask whether `consonance' concepts for closed tests that imply computational shortcuts (see e.g.
\citealp{hommel2007powerful}; \citealp{brannath2010shortcuts}) can be extended to `consonance' concepts for the partitioning principle that imply numerically efficient algorithms for the computation of simultaneous confidence intervals. This question is particularly important for continuous parameters and already interesting for (non-binary) discrete parameters that (e.g.) better encode nested hypotheses. This paper is devoted to this question and will provide positive answers. 

We end the section with a remark on conditions \eqref{eq:ec_clt} and \eqref{eq:ec_wpp}. Since the first means to control the type I error rate over a larger parameter subset than the second, the second has the potential to provide more efficient tests than the first. This was illustrated in Figure~\ref{fig:ClosedVSPP} for the case of two hypotheses by the ellipses indicating the parameter constellations $\delta\in\{0,1\}^2$ under which the type I error must be controlled. \citet{finner2002partitioning}  mathematically prove and illustrate by examples, that using the partitioning principle never leads to less efficient tests than the closed testing principle and can sometimes provide more efficient tests. Let for all $i$ the indicator vector $\delta^{(i)}$ defined by $\delta^{(i)}_i=0$, and $\delta_j^{(i)}=1$ for $j\neq i$. Then, if the property $\varphi_i^{\text{part}}=\psi_{\delta^{(i)}}$ is satisfied for all $i$, no efficiency loss results from using the minimum in \eqref{eq:ptp} instead of the local test $\psi_{\delta^{(i)}}$ for $\Theta_{\delta^{(i)}}$, which can result in a further efficiency gain.
Accordingly, \citet{finner2002partitioning} suggest to speak of the `strong' partitioning principle 
when $\varphi_i^{\text{part}}=\psi_{\delta^{(i)}}$ for all $i$. 

Because the focus here is on aspects of computational rather than statistical efficiency, we can largely ignore the difference between \eqref{eq:ec_clt} and \eqref{eq:ec_wpp}, and need also not to distinguish between the \textsl{weak} and \textsl{strong} partitioning principles. Since \eqref{eq:ec_wpp} is more general than \eqref{eq:ec_clt}, we assume from now on local tests $\psi_{\delta}$, $\delta\in\{0,1\}^m$ that satisfy \eqref{eq:ec_wpp}, and denote the underlying method as `partitioning principle'. However, one should note that by the formal identity of \eqref{eq:ctp} and \eqref{eq:ptp}, the below developed algorithms apply to the closed testing and both partitioning principles.

\section{Weak consonance and related algorithm for the partitioning principle}\label{sec:ccfpp}

Consonance is a helpful property of closed testing procedures that -- if  satisfied -- enables an efficient implementation of the procedure by substantially reducing the number of required intersection tests
(\citealp{hommel2007powerful}; \citealp{brannath2010shortcuts}). It has originally been defined by \citet{gabriel1969simultaneous}; see also \citet{finner2008}. Loosely speaking, consonance of the closed test means that the rejection of any intersection hypothesis implies the rejection of at least one individual hypothesis from the intersection. With our indicator variable $\delta$, this can be formalized as follows: 
For all $\delta\in \{0,1\}^m$ the rejection of $H_\delta$, i.e.\ 
$\varphi^{\text{closed}}_\delta=\min_{\delta'\le\delta} \phi_{\delta'}=1$, implies the existence of $i\in I$ with $\delta_i=0$ such that $H_i$ can be rejected, i.e. 
\begin{align}\label{eq:cons}
\varphi^{\text{closed}}_i=\min_{\delta':\delta'_i=0} \phi_{\delta'}=1. % Zweidimesionales Bild
\end{align}

We introduce here a weaker version of consonance that, as we will see later, already provides efficient projection algorithms also for non-binary discrete parameters. Of course, the weaker the version, the easier it is to verify. 
\begin{Def}\label{eq:weak_cons}
We call a test procedure based on the partitioning (or closed testing) principle with local tests $(\psi_\delta)_{\delta\in \{0,1\}^m}$ `weakly consonant' if  for all $\delta\in\{0,1\}^m$ the event $\varphi_\delta=\min_{\delta'\le\delta} \psi_{\delta'}=1$ implies the existence of $i\in I$ with $\delta_i=0$ such that 
\begin{align}\label{eq:G_delta}
\min_{\delta'\in G^{(i)}_\delta} \psi_{\delta'}=1
\quad\text{ for }\quad
G^{(i)}_\delta:=\{\delta':\delta'_i=0\text{ and }\delta'_j\ge \delta_j\text{ for all }j\not=i\}\subseteq \{0,1\}^m. 
\end{align}
Obviously,  
$G^{(i)}_\delta$ is smaller than $\{\delta':\delta'_i=0\}$ in the minimum of \eqref{eq:cons} whenever $\delta\not=\mathbf{0}=(0,\ldots,0)$.
\end{Def}

With only two parameters ($m=2$), weak consonance and consonance are equivalent, because e.g.\ the rejection of $\delta_1'\leq\delta_1$, $\delta_2'\leq\delta_2$ and $\delta_2'\geq\delta_2$ implies that $H_0:\delta_1'\leq\delta_1$ can be rejected. 
%This is because if, for e.g. $\delta=(0,1)$ it holds $\varphi_{\delta}=\min(\psi_{(0,0)},\psi_{(0,1)})=1$ the rejection of $\delta=(0,1)$ (which is the only parameter contained in $\{\delta':\delta_1'=0\}$ but not in $G_{(0,1)}^{(i)}$) is already covered.
However, this equivalence is no longer true for $m\ge 3$. 
In Figure~\ref{fig:ComparisonConsonanceVSWeaklyConsonance},  consonance (left plot) is compared to weak consonance (right plot) for three hypotheses $H_1$, $H_2$, and $H_3$. In both plots, the intersection $H_1\cap H_2$ can be rejected (by the closed testing or partitioning principle), since $(0,0,0)$ and $(0,0,1)$ are rejected locally
(big red squares). Consonance requires that there exists a component $i\in\{1,2\}$, such that all parameters $\delta'$ with $\delta'_i=0$ can be locally rejected. In the left plot this applies to $i=2$, since also $(1,0,0)$ and $(1,0,1)$ are locally rejected (small red squares). In the 
the right plot, with weak consonance, the local rejection of only $(1,0,1)$ is required, because  $G_{(0,0,1)}^{(2)}=\{(0,0,1),(1,0,1)\}$. 

% Dreidiemsionales Bild
\begin{figure}[h]\scriptsize
\begin{minipage}{0.45\textwidth}
\begin{center}
\begin{tikzpicture}[scale=1,
    x={(1cm,0cm)},
    y={(0.6cm,0.3cm)},
    z={(0cm,1cm)}]

% Würfelkanten (hinten gestrichelt)
\draw[dashed] (0,0,0) -- (0,1,0);
\draw[dashed] (0,1,0) -- (0,1,1);
\draw[dashed] (0,1,0) -- (1,1,0);

% Würfelkanten (vorne)
\draw (0,0,0) -- (1,0,0) -- (1,0,1) -- (0,0,1) -- cycle;

% Würfelkanten (oben)
\draw (0,0,1) -- (1,0,1) -- (1,1,1) -- (0,1,1) -- cycle;

% Würfelkanten (rechts)
\draw (1,0,0) -- (1,1,0);
\draw (1,1,0) -- (1,1,1);

% Alle Eckpunkte (schwarz)
\foreach \x in {0,1}
\foreach \y in {0,1}
\foreach \z in {0,1}
{
    \fill (\x,\y,\z) circle (2pt);
}
% Ablehnungen
\draw (0,0,0) node[draw=red, rectangle, fill=red, minimum size=7pt, inner sep=0pt]{} node[below left] {$(0,0,0)\hspace{.5em}$};
\draw (1,0,0) node[draw=red, rectangle, fill=red, minimum size=4pt, inner sep=0pt]{} node[below right] {$(1,0,0)\hspace{.5em}$};
\draw (1,0,1) node[draw=red, rectangle, fill=red, minimum size=4pt, inner sep=0pt]{} node[right] {$(1,0,1)$};
\draw (0,0,1) node[draw=red, rectangle, fill=red, minimum size=7pt, inner sep=0pt]{} node[above left] {$\delta=(0,0,1)\hspace{.5em}$};
\end{tikzpicture}

    \end{center}
\end{minipage}
\hfill
\begin{minipage}{0.45\textwidth}\scriptsize
\begin{center}
\begin{tikzpicture}[scale=1,
    x={(1cm,0cm)},
    y={(0.6cm,0.3cm)},
    z={(0cm,1cm)}]

% Würfelkanten (hinten gestrichelt)
\draw[dashed] (0,0,0) -- (0,1,0);
\draw[dashed] (0,1,0) -- (0,1,1);
\draw[dashed] (0,1,0) -- (1,1,0);

% Würfelkanten (vorne)
\draw (0,0,0) -- (1,0,0) -- (1,0,1) -- (0,0,1) -- cycle;

% Würfelkanten (oben)
\draw (0,0,1) -- (1,0,1) -- (1,1,1) -- (0,1,1) -- cycle;

% Würfelkanten (rechts)
\draw (1,0,0) -- (1,1,0);
\draw (1,1,0) -- (1,1,1);

% Alle Eckpunkte (schwarz)
\foreach \x in {0,1}
\foreach \y in {0,1}
\foreach \z in {0,1}
{
    \fill (\x,\y,\z) circle (2pt);
}
% Ablehnungen
\draw (0,0,0) node[draw=red, rectangle, fill=red, minimum size=7pt, inner sep=0pt]{} node[below left] {$(0,0,0)\hspace{.5em}$};
\fill (1,0,0) circle (2pt) node[below right] {$ (1,0,0) $};;
\draw (1,0,1) node[draw=red, rectangle, fill=red, minimum size=4pt, inner sep=0pt]{} node[right] {$(1,0,1)$};
\draw (0,0,1) node[draw=red, rectangle, fill=red, minimum size=7pt, inner sep=0pt]{} node[above left] {$\delta=(0,0,1)\hspace{.5em}$};
\end{tikzpicture}
    \end{center}
\end{minipage}
\caption{Consonance of closed test (left) and weak consonance for partitioning principle (right).\label{fig:ComparisonConsonanceVSWeaklyConsonance}}
\end{figure}

\citet{hommel2007powerful} consider a general class of weighted Bonferroni tests with a kind of monotonicity property for the weights, for which one can easily show that they are weakly consonant. We review this class using our binary parameter $\delta$. To this end, assume an undadjusted p-value $p_i$ for each $H_i$ and weights $w_i(\delta)$ that depend (in general) on the whole vector $\delta$, whereby $w_i(\delta)=0$ whenever $\delta_i=1$, since we are not interested in rejecting the alternative $K_i$. We test each intersection hypothesis with the adjusted p-value $p(\delta)=\min_{1=1}^m p_i/w_i(\delta)$ (that could, but need not to be truncated at 1) and reject $H_\delta$ if $p(\delta)\le \alpha$. \citet{hommel2007powerful} assume that each weight $w_i(\delta)$ is non-decreasing in $\delta_j$ for each $j\not=i$, i.e.\ $w_i(\delta')\ge w_i(\delta)$ whenever $\delta'_i=\delta_i$
and $\delta'_j\ge \delta_j$ for all $j\not=i$. Obviously, $p(\delta)=\min_{1=1}^m p_i/w_i(\delta)\le\alpha$ implies $p_i/ w_i(\delta)\le \alpha$ for at least on $i$ with $\delta_i=0$. By the assumed monotonicity property of $w_i(\delta)$, we obtain
$p(\delta')\le p_i / w_i(\delta')\le  p_i / w_i(\delta) \le\alpha$ for $\delta'$ with $\delta'_i=0$ and $\delta'_j\ge \delta_j$ for all $j\not=i$. This clearly implies weak consonance. 

As shown in \citet{hommel2007powerful} and \citet{brannath2010shortcuts}, consonant closed tests can be efficiently implemented by so-called `step-down' algorithms with at most $m$ intersection tests. %We show in the appendix, 
One can show that the same algorithm is possible whenever the local tests $\phi_\delta$ or $\psi_\delta$ satisfy the above introduced weak consonance property. In the next subsection, we present a generalization of 
this algorithm to (also) non-binary discrete parameters 
that applies under a generalization of the weak consonance property. 

\subsection{Weak consonance and efficient projection algorithm for discrete parameter}\label{sec:WCandAfDP}

We consider now the situation where $\Theta_i=\{\theta_{i0},\theta_{i1},\theta_{i2},\ldots\}$ for ordered discrete values $\theta_{i0}<\theta_{i1}<\theta_{i2}<\dots$ and we are interested in testing $H_{il}:\theta_i=\theta_{il}$ %versus  $K_{il}:\theta_i>\theta_{il}$ 
for all $\theta_{il}\in\Theta_i$ and $i\in I =\{1,\ldots,m\}$. Following the partitioning principle, we define for each $\theta\in \Theta=\times_{i=1}^m\Theta_i$ a level $\alpha$ test $\psi_\theta$ for $\theta$ which satisfies $\mathbb{P}_\theta(\psi_\theta )\le \alpha$. This provides the $1-\alpha$ confidence set
\begin{align}\label{eq:C}
C=\{\theta\in\Theta: \psi_\theta=0\}.
\end{align}
In general, this is not a simultaneous interval and needs to be projected to the smallest one-sided simultaneous interval $SCI:=\times_{i=1}^m [L_i,\infty)\supseteq C$, if we are interested in an individualized inference for the components $\theta_i$. Since $\mathbb{P}_\theta\big(\theta\in SCI\big)\ge \mathbb{P}_\theta(\theta\in C)\ge 1-\alpha$, this is a simultaneous $1-\alpha$-confidence interval. 

The lower bounds $L_i$ of the projection can be determined with the in $l$ non-increasing decision functions
\begin{align}\label{eq:proj_disc}
    \varphi^{\text{proj}}_{il} = \min_{\theta': \theta_i'\le \theta_{il}} \psi_{\theta'}
\quad\text{ as }\quad
L_i := \min_{l=0,1,2,\ldots}\{\theta_{il}: \varphi^{\text{proj}}_{il} =0\}.
\end{align}
Note the formal similarity between  the above $\varphi^{\text{proj}}_{il}$ and $\varphi_i^{\text{part}}$ in \eqref{eq:ptp}. 
We are interested in algorithms that efficiently implement the two minima in \eqref{eq:proj_disc} and call any such algorithm a `projection algorithm'. 

In order to identify situations where an efficient projection algorithm exists (and to define it), we generalize the weak consonance property of the previous section. 
\begin{Def}\label{eq:weak_cons_disc}
We call a test procedure based on the partitioning principle with local tests $(\psi_\theta)_{\theta\in \Theta}$ `weakly consonant' if for all $\theta\in\Theta$ the event $\min_{\theta'\le\theta} \psi_{\theta'}=1$ implies the existence of $i\in I$ with  
\begin{align}\label{eq:G_Geil}
\min_{\theta'\in G^{(i)}_\theta} \psi_{\theta'}=1
\quad\text{ for }\quad
G^{(i)}_\theta:=\{\theta'\in \Theta:\theta'_i=\theta_i \text{ and } \theta'_j\geq\theta_j \text{ for all }j\not=i\}\subseteq \Theta.
\end{align}
Note the similarity between $G^{(i)}_\theta$ and $G^{(i)}_\delta$ in \eqref{eq:G_delta}. 
\end{Def}

With weak consonance, we get the following algorithm to calculate the projection of $C$ in \eqref{eq:proj_disc}:

\begin{algorithm}[H]
\setstretch{1.1}
\caption{Projection algorithm for discrete parameter} \label{alg_abstract_discrete}
Initialize the lower bounds to $\lambda=(\theta_{10},\dots,\theta_{m0})$
and set $k_i\rightarrow 0$ for all $1\leq i\leq m$\;
\While{$\psi_{\lambda}=1$}{
find $i$ %direction with known weak consonance
with $\min_{\theta'\in G^{(i)}_{\lambda}} \psi_{\theta'}=1$\; and 
%}{
step to the next parameter point in the $i$-th coordinate direction, i.e.:\\
update $\lambda_i\rightarrow\theta_{i,k_i+1}$ and $k_i\rightarrow k_i+1$\;
}
\Return{the lower confidence bounds $L_1=\lambda_1, \ldots, L_m=\lambda_m$.}
\end{algorithm}
\ \\[.25em]
The algorithm is illustrated in Figure \ref{fig:ProjectionAlgoDiscrete} for $m=2$.
It generalizes the well-known step-down algorithm for consonant closed tests (see \citealp{hommel2007powerful}).
We show in the Appendix that in each step with $\psi_\lambda=1$ we obtain
$\lambda_j\leq  L_j$ for all $j\in I$. Therefore $\psi_{\lambda}=1$ implies $\min_{\theta'\le\lambda} \psi_{\theta'}=1$, and by the weak consonance property, we find some direction $i$ with $\min_{\theta'\in G^{(i)}_{\lambda}} \psi_{\theta'}=1$. Hence, the algorithm steps forward whenever $\psi_\lambda=1$, and it stops when $\psi_\lambda=0$ for the first time. The latter implies $\lambda\in C$, and therefore the algorithm attends the projection $L$. 

Algorithm 1 is linear in $m$, in the sense that the number of steps (and performed local tests) is bounded by $\sum_{i=1}^m |L_i-\theta_{i,0}|$. The total number of local tests that would have to be performed without such an algorithm is either infinite (whenever a parameter has infinitely many values) or increases exponentially with $m$. 

\begin{figure}[h]
    \centering
% Obere Reihe
    \begin{minipage}{0.45\textwidth}
        \centering
        \begin{tikzpicture}[scale=0.6]

  \draw[->] (0,0) -- (5.5,0) node[right] {$\theta_1$};
  \draw[->] (0,0) -- (0,5) node[above] {$\theta_2$};
\def\circles{
  0/0, 0/1, 0/2, 0/3, 0/4,
  1/0, 1/1, 1/2, 1/3, 1/4,
  2/0, 2/3,  2/4,
  3/0, 3/1,
  4/0, 4/1, 
  5/0, 5/1, 5/2, 5/3
}
\foreach \x/\y in \circles {
    \fill (\x,\y) circle (2pt) {};
}

\def\rectangles{
  0/1, 0/2, 0/3, 0/4
}
\foreach \x/\y in \rectangles {
    \draw (\x,\y) node[draw=red, rectangle, fill=red, minimum size=4pt, inner sep=0pt]{};
}
\draw (0,0) node[draw=red, rectangle, fill=red, minimum size=7pt, inner sep=0pt]{};

\draw[dashed, OliveGreen, thick] (0,0) -- (0,4.5);
\draw[->, red, thick] (-0.25,0.25) -- (-0.25,0.75) {};

\node[text=blue, scale=1] at (2,2) {$\star$};
\node[text=blue, scale=1] at (2,1) {$\star$};
\node[text=blue, scale=1] at (3,2) {$\star$};
\node[text=blue, scale=1] at (4,2) {$\star$};
\node[text=blue, scale=1] at (3,4) {$\star$};
\node[text=blue, scale=1] at (3,3) {$\star$};
\node[text=blue, scale=1] at (4,3) {$\star$};
\node[text=blue, scale=1] at (5,4) {$\star$};
\node[text=blue, scale=1] at (4,4) {$\star$};
\end{tikzpicture}\\ 
        \text{Step 1}
    \end{minipage}
    \hfill
    \begin{minipage}{0.45\textwidth}
        \centering
         \begin{tikzpicture}[scale=0.6]
  \draw[->] (0,0) -- (5.5,0) node[right] {$\theta_1$};
  \draw[->] (0,0) -- (0,5) node[above] {$\theta_2$};
\def\circles{
  0/0, 0/1, 0/2, 0/3, 0/4,
  1/1, 1/2, 1/3, 1/4,
  2/3, 2/4,
  3/1,
  4/1,
  5/1, 5/2, 5/3
}
\foreach \x/\y in \circles {
    \fill (\x,\y) circle (2pt) {};
}

\def\rectangles{
  0/0, 0/1, 0/2, 0/3, 0/4,
  2/0, 3/0, 4/0, 5/0
}
\foreach \x/\y in \rectangles {
    \draw (\x,\y) node[draw=red, rectangle, fill=red, minimum size=4pt, inner sep=0pt]{};
}
\draw (1,0) node[draw=red, rectangle, fill=red, minimum size=7pt, inner sep=0pt]{};

\draw[dashed, OliveGreen, thick] (1,0) -- (5.5,0);
\draw[->, red, thick] (1.25,-0.25) -- (1.75,-0.25) {};

\node[text=blue, scale=1] at (2,2) {$\star$};
\node[text=blue, scale=1] at (2,1) {$\star$};
\node[text=blue, scale=1] at (3,2) {$\star$};
\node[text=blue, scale=1] at (4,2) {$\star$};
\node[text=blue, scale=1] at (3,4) {$\star$};
\node[text=blue, scale=1] at (3,3) {$\star$};
\node[text=blue, scale=1] at (4,3) {$\star$};
\node[text=blue, scale=1] at (5,4) {$\star$};
\node[text=blue, scale=1] at (4,4) {$\star$};
\end{tikzpicture}\\
        \text{Step 2}
    \end{minipage}

    \vspace{0.5cm}

    % Untere Reihe
    \begin{minipage}{0.45\textwidth}
        \centering
         \begin{tikzpicture}[scale=0.6]

  \draw[->] (0,0) -- (5.5,0) node[right] {$\theta_1$};
  \draw[->] (0,0) -- (0,5) node[above] {$\theta_2$};
\def\circles{
  0/0, 0/1, 0/2, 0/3, 0/4,
  1/0, 1/1, 1/2, 1/3, 1/4,
  2/0, 2/3, 2/4,
  3/0, 3/1,
  4/0, 4/1,  
  5/0, 5/1, 5/2, 5/3
}
\foreach \x/\y in \circles {
    \fill (\x,\y) circle (2pt) {};
}

\def\rectangles{
  0/0, 0/1, 0/2, 0/3, 0/4,
  1/0, 1/1, 1/2, 1/3, 1/4,
  2/0, 
  3/0, 
  4/0, 
  5/0
}
\foreach \x/\y in \rectangles {
    \draw (\x,\y) node[draw=red, rectangle, fill=red, minimum size=4pt, inner sep=0pt]{};
}
\draw (1,1) node[draw=red, rectangle, fill=red, minimum size=7pt, inner sep=0pt]{};

\draw[dashed, OliveGreen, thick] (1,1) -- (1,4);
\draw[->, red, thick] (0.75,1.25) -- (0.75,1.75) {};

\node[text=blue, scale=1] at (2,2) {$\star$};
\node[text=blue, scale=1] at (2,1) {$\star$};
\node[text=blue, scale=1] at (3,2) {$\star$};
\node[text=blue, scale=1] at (4,2) {$\star$};
\node[text=blue, scale=1] at (3,4) {$\star$};
\node[text=blue, scale=1] at (3,3) {$\star$};
\node[text=blue, scale=1] at (4,3) {$\star$};
\node[text=blue, scale=1] at (5,4) {$\star$};
\node[text=blue, scale=1] at (4,4) {$\star$};

\end{tikzpicture}\\
        \text{Step 3}
    \end{minipage}
    \hfill
    \begin{minipage}{0.45\textwidth}
        \centering
         \begin{tikzpicture}[scale=0.6]
  \draw[->] (0,0) -- (5.5,0) node[right] {$\theta_1$};
  \draw[->] (0,0) -- (0,5) node[above] {$\theta_2$};
\def\circles{
  0/0, 0/1, 0/2, 0/3, 0/4,
  1/0,
  2/3, 2/4,
  3/1,
  4/1, 
  5/1, 5/2, 5/3
}
\foreach \x/\y in \circles {
    \fill (\x,\y) circle (2pt) {};
}

\def\rectangles{
  0/0, 0/1, 0/2, 0/3, 0/4,
  1/0, 1/1, 1/2, 1/3, 1/4,
  2/0,  
  3/0, 
  4/0, 
  5/0
}
\foreach \x/\y in \rectangles {
    \draw (\x,\y) node[draw=red, rectangle, fill=red, minimum size=4pt, inner sep=0pt]{};
}
\node[text=blue, scale=2] at (2,1) {$\mathbf{\star}$};

\node[text=blue, scale=1] at (2,2) {$\star$};
\node[text=blue, scale=1] at (3,2) {$\star$};
\node[text=blue, scale=1] at (4,2) {$\star$};
\node[text=blue, scale=1] at (3,4) {$\star$};
\node[text=blue, scale=1] at (3,3) {$\star$};
\node[text=blue, scale=1] at (4,3) {$\star$};
\node[text=blue, scale=1] at (5,4) {$\star$};
\node[text=blue, scale=1] at (4,4) {$\star$};

\draw[dashed, OliveGreen, thick] (2,1) -- (2,4.5);
\draw[dashed, OliveGreen, thick] (2,1) -- (5.5,1);
\end{tikzpicture}\\
        \text{Step 4}
    \end{minipage}
    \caption{Projection algorithm for discrete parameter\label{fig:ProjectionAlgoDiscrete}}
\end{figure}

\subsection{Monotone weighted Bonferroni tests for discrete parameter}

As a general example, we define a class of local tests $\psi_\theta$, that generalizes the weighted Bonferroni tests of \citet{hommel2007powerful} from the binary to the non-binary case. To this end, we assume for each $\theta_{il}$ a p-value $p_i(\theta_{il})$ for testing 
$\theta_i=\theta_{il}$ (typically against $\theta_i>\theta_{il}$).
The local test for each $\theta\in\Theta$ is a weighted Bonferroni test with p-value $p(\theta)=\min_{i=1}^m p_i(\theta_i)/w_i(\theta)$ where the weights satisfy $\sum_{i=1}^m w_i(\theta)=1$, and each $w_i(\theta)$ is positive and non-decreasing in $\theta_j$ for all $j\neq i$. Weak consonance of the local tests $p(\theta)\le \alpha$ follows by the same arguments as for the original weighted Bonferroni tests of \citet{hommel2007powerful} reviewed in the previous subsection.   
\ \\[.5em]
\noindent \textbf{Example 1 -- Generalized weighted Holm procedure.} As a hypothetical application example, assume a clinical trial with non-inferiority and superiority hypotheses, $H_{i1}\subset H_{i2}$, for two endpoints $i=1,2$. We encode for each endpoint $i$ the two null hypotheses and the superiority alternative with the values 
$1,2$ and $3$. This gives the discrete parameter space $\Theta=\{1,2,3\}^2$. We assume that a non-inferiority claim is considered more important than the superiority claim, and therefore, we use for 
$\theta=(\theta_1,\theta_2)\in\{1,2,3\}^2$ weights $w_i(\theta)=a_i(\theta_i)/\big(a_1(\theta_1)+a_2(\theta_2)\big)$, $i=1,2$, based on non-increasing functions $a_i:\{1,2,3\}\to \mathbb{R}$ with $a_i(3)=0$ for all $i=1,2,3$, with the latter condition to avoid rejection of the superiority alternatives. A simple example of such a function would be $a_i(j)=3-j$. Obviously, each $w_i(\theta)$ is increasing in $\theta_j$ for $j\not=i$, and therefore the resulting procedure is weakly consonant. Hence, we can apply Algorithm 1.

The example and application of Algorithm 1 can easily be extended to situations with more than two endpoints and a finite or countably infinite sequence of nested hypothesis $H_{i1}\subset H_{i2}\subset H_{i3} \subset \cdots$ for each endpoint $i$ and non-increasing individual weighting functions $a_i:\mathbb{N}\to [0,\infty)$. 
This can be viewed as a generalization of the weighted Bonferroni-Holm procedure for multiple hypotheses, where the latter applies only to binary parameters with decreasing weight functions $a_i(0)=\alpha_i>0$ and $a_i(1)=0$. In Section~\ref{sec:cwholm} we will present a generalization of this procedure to continuous parameters.

\subsection{Initialization without origin}

We end the section with a comment on how to deal with (discrete) parameters that are not bounded from below. In this case, we need to know an initial parameter point $\theta_0=(\theta_{10},\ldots,\theta_{m0})$ that is smaller than the lower bound of the one-sided confidence interval, i.e.\ $\theta_{i0}\le L_i$ for all $i=1,\ldots,m$. The algorithm can then start from $\theta_0$ as described in Algorithm 1. The starting point $\theta_0$ will in general depend on the data and may require an extra initialization algorithm for its determination.

We are not aware of a general initialization algorithm for the determination of $\theta_0$
that follows from a sufficiently easy to verify consonance property. Such an algorithm needs to be developed on a case-by-case basis. For the above described class of weighted Bonferroni tests, an initial value can easily be determined, if each p-value $p_i(\theta_i)$ is increasing in $\theta_i$ (which is typically the case for one-sided p-values) and there exists a $\theta_i^{*}\in\mathbb{R}$ such that on the half-space $\{\theta\in\Theta:\theta_i\le \theta^{*}_i\}$, the weight $w_i(\theta)$ is  bounded from below by some positive value $w_{i0}>0$, i.e. $w_i(\theta)\geq w_{i0}>0$ for all $\theta\in\Theta$ with $\theta_i\leq\theta_i^*$. The initial value $\theta_{i0}$ can then be determined as the minimum of $\theta_i^*$ and the lower one-sided confidence bound at level $w_{i0}\,\alpha$.

\section{Consonance concepts and algorithms for continuous parameter}\label{sec:CCandAfCP}

We consider now the more complex situation with $m$ continuous parameters $\theta_i\in \Theta_i$ where each $\Theta_i\subseteq \mathbb{R}$ is an interval. 
As before we assume a family of local test $(\psi_\theta)_{\theta\in\Theta}$ with $\mathbb{P}_{\theta}(\psi_\theta)\le\alpha$ for all $\theta\in\Theta$. Hence, we consider the finest possible partitioning. We start again with the confidence set $C=\{\theta:\psi_\theta=0\}$ and aim to determine its projection, i.e. the smallest $SCI=\times_{i=1}^m [L_i,\infty)\supseteq C$. Similarly to the discrete case, the lower bounds $L_i$ of the projection are defined with
\begin{align}\label{eq:proj_cont}
    \varphi^{\text{proj}}_i(\theta_i) = \min_{\theta': \theta_i'\le \theta_i} \psi_{\theta'}
\quad\text{ as }\quad
L_i := \inf\{\theta'_i: \varphi^{\text{proj}}_{i}(\theta'_i) =0\}.
\end{align}
The major difference to \eqref{eq:proj_disc} for the discrete case is that, with a continuous parameter, for the bound $L_i$ we need to determine the infimum instead of the minimum, which is a mathematical limit and therefore can (in general) only be determined approximately. Moreover, there are no naturally defined values, the algorithm can run through. 

A more severe issue comes with the in general non-monotonous behavior of $\psi_\theta$ in $\theta$, by which we can never be sure that we are not jumping over $L_i$ when moving along any kind of grid, and so end up with an anti-conservative value. This issue 
can be avoided under assumptions that are stronger than the consonance and weak consonance conditions, which, as we will see later, still applies to typical examples. 

\begin{Def}
Assume a family of local tests $(\psi_\theta)_{\theta\in\Theta}$. We call this family `uniform-consonant at $\theta\in\Theta$', if
the event $\min_{\theta'\le\theta}\psi_{\theta'}=1$ implies the existence of $i\in I$ 
with  
\begin{align}\label{eq:_ucons}
\varphi^{\text{proj}}_i(\theta_i) = \min_{\theta': \theta'_i\le \theta_i}\psi_{\theta'}=1.
\end{align}
Note the difference to the requirement \eqref{eq:cons} for consonance, which only requires $\psi_{\theta'}=1$ for $\theta'$ with $\theta'_i=\theta_i$.

We call the family `weakly uniform-consonant at $\theta$', if the event $\min_{\theta'\le\theta}\psi_{\theta'}=1$ implies the existence of $i\in I$ with 
\begin{align}\label{eq:D_cons}
 \min_{\theta'\in D^{(i)}_\theta} \psi_{\theta'}=1
\quad\text{ for }\quad
D^{(i)}_\theta:=\{\theta'\in \Theta:\theta'_i\le\theta_i \text{ and } \theta'_j\geq\theta_j \text{ for all }j\not=i\}\subseteq \Theta.
\end{align}
Note that $D^{(i)}_\theta$ is larger than $G^{(i)}_\theta$, and so weak uniform-consonance is stronger than weak consonance.

We say that $(\psi_\theta)_{\theta\in\Theta}$ is uniform-consonant and  weakly uniform-consonant at $\theta$ `in the direction of $i$' if \eqref{eq:_ucons} respectively \eqref{eq:D_cons} is satisfied for $i$.
\end{Def}

By the formal definition \eqref{eq:proj_cont} of the projection $L_i$, uniform-consonance at $\theta$ \textsl{in the direction of} $i$ implies $L_i\ge \theta_i$, and the failure of this 
%uniform-consonance at $\theta$ \textsl{in the direction of} $i$ 
(i.e.\ $\varphi^{\text{proj}}_i(\theta_i)=0$) 
implies $L_i\le \theta_i$. Therefore, knowing whether a family of local tests is uniform-consonant at some point $\theta$ in direction $i$ provides either a lower or an upper bound for $L_i$. Below, we will provide an algorithm for which it is sufficient to know whether we have weak uniform-consonance or not in a given direction $i$ in order to either improve a lower or upper bound for $L_i$. 

In the next subsection, we will give an example of local tests that are weakly uniform-consonant at each $\theta$ and for which we can always decide whether we have weak uniform-consonance or not in any given direction.  \\

\subsection{Continuously weighted Holm procedure}\label{sec:cwholm}

Like in Example 1, we consider a clinical trial with two endpoints and corresponding efficacy parameter $\theta_i\in\mathbb{R}$, $i=1,2$, where for both endpoints larger values correspond to higher efficacy. We further assume for each endpoint $i$ and parameter value $\vartheta_i\in\Theta_i$ a p-value $p_i(\vartheta_i)$ for testing $H^{(\vartheta_i)}_i: \theta_i=\vartheta_i$ against $K^{(\vartheta_i)}_i:\theta_i>\vartheta_i$, 
i.e.\ $\mathbb{P}_{\theta}\big(p_i(\theta_i)\le\alpha\big)\le\alpha$ for all $\alpha\in (0,1)$ and $\theta\in\mathbb{R}^2$. We also assume that $p_i(\theta_i)$ is increasing and continuous in each $\theta_i\in\mathbb {R}$. Similar to Example 1, we define for each (continuous) parameter $\theta_i$ a non-increasing and continuous function $a_i:\theta_i \to [0,\infty)$ to build the weights $w_i(\theta)=a_i(\theta_i)/\sum_{j=1}^2 a_j(\theta_j)$, $i=1,2$. We use these weights
to reject $\theta\in\mathbb{R}^2$ with the p-value $p(\theta)=\min_{i=1}^2 p_i(\theta_i)/w_i(\theta)$ and corresponding decision function $\psi_\theta=\mathbf{1}_{\{p(\theta)\le \alpha\}}$.
We aim to show that $(\psi_\theta)_{\theta\in\Theta}$ satisfy the weak uniform-consonant property and 
how we can verify at a specific point $\theta$ and given direction $i$, whether we have weak uniform-consonance in this direction or not.

Assume (without loss of generality) that $p_1(\theta_1)/w_1(\theta)=\min_{j=1}^2 p_j(\theta_j)/w_j(\theta)\leq\alpha$.
Weak uniform-consonance in direction $i=1$ follows because
$$
p_1(\theta_1)/w_1(\theta)
=\big(a_1(\theta_1)+a_2(\theta_2)\big)\, p_1(\theta_1)/a_1(\theta_1)
=\big(1+a_2(\theta_2)/a_1(\theta_1)\big)\, p_1(\theta_1)
$$
is increasing in $\theta_1$ and non-increasing in $\theta_2$. Therefore, $p(\theta)=p_1(\theta_1)/w_1(\theta)\le\alpha$  implies $p(\theta')\le p_1(\theta'_1)/w_1(\theta')\le \alpha$ for all $\theta'_1\le \theta_1$
and $\theta'_2\ge \theta_2$.
Of course, if $p_2(\theta_2)/w_2(\theta)=p_1(\theta_1)/w_1(\theta)$, we have weak uniform-consonance also in direction $j=2$.

Assume now that 
\begin{align}\label{eq:cwH_case2}
p_2(\theta_2)/w_2(\theta)>p_1(\theta_1)/w_1(\theta)
\end{align}
We show below that we have weak uniform-consonance in direction $i=2$ if and only if $p(\tilde{\theta}_1,\theta_2)\le\alpha$, where %$\tilde{\theta}_2=\theta_2$, and 
$\tilde{\theta}_1$ 
is such that $p_1(\tilde{\theta}_1)/a_1(\tilde{\theta}_1)=p_2(\theta_2)/a_2(\theta_2)$. Note that 
by our assumptions, $p_1(\vartheta_1)/a_1(\vartheta_1)$ is increasing and continuous in $\vartheta_1$ and therefore 
$\tilde{\theta}_1> \theta_1$ can be determined  by a standard root finding procedure. 

Obviously, $p(\tilde{\theta}_1,\theta_2)>\alpha$ contradicts weak uniform-consonance at $\theta$ in direction $j=2$ (note that $\tilde{\theta}_1>\theta_1$). If $p(\tilde{\theta}_1,\theta_2)\le \alpha$, we can see by similar arguments as above, that
for all $\vartheta_1\ge \tilde{\theta}_1$
$$p(\vartheta_1,\theta_2)\le \big(a_1(\vartheta_1)+a_2(\theta_2)\big)\,p_2(\theta_2)/a_2(\theta_2)\le p(\tilde{\theta}_1,\theta_2)\le \alpha.$$ 
Moreover, for 
$\vartheta_1\le \tilde{\theta}_1$ we have that 
$p(\vartheta_1,\theta_2)=\big(1+a_2(\theta_2)/a_1(\vartheta_1)\big) p_1(\vartheta_1)$ is increasing in $\vartheta_1$ and therefore $p(\vartheta_1,\theta_2)\le p(\tilde{\theta}_1,\theta_2) \le\alpha$ for all 
$\vartheta_1\le \tilde{\theta}_1$. In summary, we have shown that, for a given $\theta_2$ with \eqref{eq:cwH_case2}, the maximum of $p(\vartheta_1,\theta_2)$ for $\vartheta_1\ge \theta_1$ is attained at the $\vartheta_1=\tilde{\theta}_1$ for which $p_1(\tilde{\theta}_1)/a_1(\tilde{\theta}_1)=p_2(\theta_2)/a_2(\theta_2)$. Of course, this means that $p(\tilde{\theta}_1,\theta_2) \le\alpha$ implies weak consonance in direction $i$, but is not yet known to imply weak uniform-consonance in this direction.

To also verify this, we need to show that $p(\vartheta_1,\theta'_2)\le p(\tilde{\theta}_1,\theta_2)$ for all $\vartheta_1\ge \theta_1$ and $\theta'_2\le \theta_2$. To this end, note that, by the previous arguments, for all $\vartheta_1\ge \theta_1$ and $\theta'_2\le \theta_2$ we get $p(\vartheta_1,\theta'_2)\le p(\tilde{\theta}'_1,\theta'_2)$ where $\tilde{\theta}'_1$ satisfies $p_1(\tilde{\theta}'_1)/a_1(\tilde{\theta}'_1)=p_2(\theta'_2)/a_2(\theta'_2)$. Now, the latter identity implies
$$p(\tilde{\theta}'_1,\theta'_2)=a_1(\tilde{\theta}'_1) \frac{p_1(\tilde{\theta}'_1)}{a_1(\tilde{\theta}'_1)} + a_2(\theta'_2) \frac{p_2(\theta'_2)}{a_2(\theta'_2)}=   p_1(\tilde{\theta}'_1) +p_2 (\theta'_2)\le  p_1(\tilde{\theta}_1) +p_2 (\theta_2)=p(\tilde{\theta}_1,\theta_2)$$
and therefore $p(\vartheta_1,\theta'_2)\le p(\tilde{\theta}_1,\theta_2)$ for all $\vartheta_1\ge \theta_1$ and $\theta'_2\le \theta_2$. This shows that $p(\tilde{\theta}_1,\theta_2)\le\alpha$  implies even weak uniform-consonance.

We call the introduced procedure a `continuously weighted Holm' procedure since it generalizes the Bonferroni-Holm procedure for multiple tests (i.e.\ binary parameters) to continuous parameters with continuous weights.   
In the Appendix we extend this procedure to the general case of $m\ge 2$ parameter $\theta_i$, where each parameter point $\theta\in\mathbb{R}^m$ is tested by a weighted Bonferroni test with weights $w_i(\theta)=a_i(\theta_i)/\sum_{j=1}^m a_j(\theta_j)$ based on individual, non-increasing and continuous functions $a_i:\vartheta_i\in\mathbb{R}\to [0,\infty)$. In this case, the verification (or falsification) of weak uniform consonance in a specific direction $j$ requires the determination of up to $m-1$ parameter values $\tilde{\theta}_l$ that are similar to the above $\tilde{\theta}_1$. 

\subsection{Projection algorithm for continuous parameter}

We introduce now an algorithm for the general situation of $m$ continuous parameters $\theta_i\in\Theta_i$ (an interval) that is based on weak uniform-consonance and the possibility to verify this at each point $\theta$ and direction~$i$. As in the discrete case, the algorithm requires an initial lower bound $\theta_0=(\theta_{10},\ldots,\theta_{m0})\le L$ for the (unknown) confidence bound $L$. If all $\Theta_i$ are bounded from below, the algorithm can start at the origin. Otherwise, we need (as in the discrete parameter case) an initialization algorithm that provides data-driven lower bounds for all (from below unbounded) $\theta_i$.  In the following, we will assume the existence of such an initialization algorithm. Later, we will present an example of an initialization algorithm for a concrete situation.

A first, but rough projection algorithm could be to discretize the parameter spaces $\Theta_i$ and apply Algorithm 1 on the resulting grid. However, sufficiently good approximations will (in general) require rather fine grids and result in computationally intensive algorithms. We therefore suggest below the more refined Algorithm 2 that consist of passing through a rough grid and then continuing with a bisection search to approximate $L_i$ up to the required precision $\varepsilon$.

\begin{algorithm}[H]
\setstretch{1.1}
\caption{Projection algorithm for continuous parameter with discretization} \label{alg_abstract_continuous}
\stepone{
Initialize the lower bounds to $\lambda=(\theta_{10},\dots,\theta_{m0})$ where $\lambda_i=\theta_{i0}\le L_i$ for all $i=1,\ldots,m$\;
Define a discrete grid starting at $\lambda$ with ordered points $\theta_{i0}<\theta_{i1}<\theta_{i2},\ldots$, $1\leq i\leq m$\;
Set $k_i\rightarrow 0$ for all $1\leq i\leq m$\;
}
\stepgrid{
\For{each component $i=1,\ldots,m$}{
\While{weak uniform-consonance at $\lambda$ in direction $i$}{
step to the next parameter point in the $i$-th coordinate direction, i.e.:\\
update $\lambda_i\rightarrow\theta_{i,k_i+1}$ and $k_i\rightarrow k_i+1$\;
}
Set upper approximation: $\nu_i\rightarrow \lambda_i$\;
Set $\lambda_i\rightarrow \theta_{i, k_i-1}$\;
}
}
\setstretch{1.1}
\stepbisec{
\For{each component $i=1,\ldots,m$}{
Initialize current grid point $\xi \rightarrow \lambda$\;
\While{$\lambda_i-\nu_i>\varepsilon$ %$\text{iterations}<\text{maxIterations}$
}{
Update current grid point $\xi$: Set $\xi_i=(\lambda_i+\nu_i)/2$ and fix other components\;
\eIf{weak uniform-consonance at $\xi$  in direction $i$}{
Update lower approximation $\lambda_i\rightarrow(\lambda_i+\nu_i)/2$\;
}{
Update upper approximation $\nu_i\rightarrow(\lambda_i+\nu_i)/2$\;
}
}
}
}
\Return{Lower approximation $\lambda$ and upper approximation $\nu$ of $L$.}
\end{algorithm}

Algorithm~\ref{alg_abstract_continuous} is illustrated in Figure~\ref{alg:projContinParam} for two parameters. As illustrated in this figure and shown in the Appendix for general $m$, we obtain $\varphi^{\text{proj}}_i(\lambda_i)=1$ (or $\varphi^{\text{proj}}_i(\xi_i)=1$) at every iterative step with current test parameter value $\lambda$ (or $\xi$) if we can show weak uniform-consonance at the current $\lambda$ (or $\xi$) in the direction $i$ along we move. This implies that the $i$-th component of the current constellation provides a lower bound for $L_i$. Obviously, if we fail to show weak uniform-consonance, then $\varphi^{\text{proj}}_i(\lambda_i)=0$ (or $\varphi^{\text{proj}}_i(\xi_i)=0$), and the current constellation provides an upper bound for $L_i$. 

\begin{figure}[h]
 \begin{minipage}{0.45\textwidth}
    \centering
     \begin{tikzpicture}[scale=0.7]
  \draw[->] (0,0) -- (5.5,0) node[right] {$\theta_1$};
  \draw[->] (0,0) -- (0,5) node[above] {$\theta_2$};
\def\circles{
  0/0, 0/1, 0/2, 0/3, 0/4,
  1/0,
  2/1, 2/2, 2/3, 2/4,
  3/1, 3/2, 3/3, 3/4,
  4/1, 4/2, 4/3, 4/4,
  5/1, 5/2, 5/3, 5/4
}
\foreach \x/\y in \circles {
    \fill (\x,\y) circle (2pt) {};
}

\def\rectangles{
  0/0, 0/1, 0/2, 0/3, 0/4,
  1/0, 1/1, 1/2, 1/3, 1/4,
  2/0, 2/1,
  3/0, 3/1,
  4/0, 4/1,
  5/0, 5/1
}
\foreach \x/\y in \rectangles {
    \draw (\x,\y) node[draw=red, rectangle, fill=red, minimum size=4pt, inner sep=0pt]{};
}

\draw (1,0) node[draw=red, rectangle, fill=red, minimum size=7pt, inner sep=0pt]{};
\draw (0,0) node[draw=red, rectangle, fill=red, minimum size=7pt, inner sep=0pt]{};

\fill[pattern=north east lines, pattern color=red] (0,0) rectangle (1,4.5);

\draw[decorate, decoration={brace, amplitude=6pt}]
        (3,-0.15) -- (2,-0.15)
        node[midway, below=7pt] {$\varepsilon$};

\draw[->, red, thick] (1,-0.3) -- (0.5,-0.3) {};
\draw[->, red, thick] (1.15,0.25) -- (1.15,0.75) {};

\fill[blue!80]
  plot[smooth cycle] coordinates {
    (1.35,1.35) (2.5,4.5) (5.5, 4.5) (4,2.8) 
  };
\def\circles{
  2/2, 2/3, 2/4,
  3/2, 3/3, 3/4,
 4/2, 4/3, 4/4,
5/2, 5/3, 5/4
}
\foreach \x/\y in \circles {
    \fill (\x,\y) circle (2pt) {};
}

\end{tikzpicture}\\
\text{Step 2 for $i=1$}
\end{minipage}
 \textbf{$\Rightarrow$}   % \hfill
    \begin{minipage}{0.45\textwidth}
    \centering
     \begin{tikzpicture}[scale=0.7]
  \draw[->] (0,0) -- (5.5,0) node[right] {$\theta_1$};
  \draw[->] (0,0) -- (0,5) node[above] {$\theta_2$};

\def\rectangles{
  0/0, 0/1, 0/2, 0/3, 0/4,
  1/0, 1/1, 1/2, 1/3, 1/4,
  2/0, 2/1,
  3/0, 3/1,
  4/0, 4/1,
  5/0, 5/1
}
\foreach \x/\y in \rectangles {
    \draw (\x,\y) node[draw=red, rectangle, fill=red, minimum size=4pt, inner sep=0pt]{};
}

\draw (1,1) node[draw=red, rectangle, fill=red, minimum size=7pt, inner sep=0pt]{};

\fill[pattern=north west lines, pattern color=red] (1,1) rectangle (5.25,0);
\fill[pattern=north west lines, pattern color=red] (0,1) rectangle (5.25,0);
\fill[pattern=north west lines, pattern color=red] (1,0) rectangle (5.25,0);
\fill[pattern=north west lines, pattern color=red] (0,0) rectangle (5.25,0);

\fill[blue!80]
  plot[smooth cycle] coordinates {
    (1.35,1.35) (2.5,4.5) (5.5, 4.5) (4,2.8) 
  };
\def\circles{
  2/2, 2/3, 2/4,
  3/2, 3/3, 3/4,
 4/2, 4/3, 4/4,
5/2, 5/3, 5/4
}
\foreach \x/\y in \circles {
    \fill (\x,\y) circle (2pt) {};
}

\draw[decorate, decoration={brace, amplitude=6pt}]
        (3,-0.15) -- (2,-0.15)
        node[midway, below=7pt] {$\varepsilon$};

\fill[pattern=north east lines, pattern color=red] (0,0) rectangle (1,4.5);
\draw[->, red, thick] (1.25,1.15) -- (1.75,1.15) {};
\draw[->, red, thick] (1.25,0.75) -- (1.25,0.25)  {};

\end{tikzpicture}\\
\text{Step 2 for $i=2$}
\end{minipage}
\caption{Grid Traversal (Step 2) of projection algorithm for continuous parameter}\label{alg:projContinParam}
\end{figure}

We obtain a conservative approximation, if we report the final lower bound $\lambda$ of Algorithm 3, and should also report the final precision $\lambda-\nu$. The latter is particularly important if, for practical reasons, we impose in step 3 a restriction on the number of steps in the bisection search and stop when this number is reached before reaching the desired precision.   

We end this section, recalling that for Algorithm 2, we need to be able to verify at each (grid) point whether weak uniform-consonance applies in a given direction $i$.  While an (easy) verification algorithm is available for the continuously weighted Holm procedure of the previous section (see also the appendix), we are not aware of such an algorithm for the more general class of procedures introduced in the next section. This class will consist of monotonously weighted Bonferroni tests for continuous parameter that generalizes the weighted Bonferroni test of \citet{hommel2007powerful} and the informative simultaneous SCI of \citet{brannath2014new}, \citet{schmidt2014informative, schmidt2015informative}, \citet{informativeSCIBrannathKlugeScharpenberg}, and \citet{klugeBrannath2026GSD}. The new algorithm will also be based on weak uniform-consonance.
\section{Monotonously weighted Bonferroni tests for continuous parameter}\label{sec:monWeightBonf}

In this section, we consider the partitioning principle with a general class of local tests $(\psi_\theta)_{\theta\in\mathbb{R}^m}$,
%for each $\theta\in\Theta=\times_{i=1}^m\Theta_i$ 
that generalizes the more specific classes in \citet{brannath2014new}, \citet{schmidt2014informative, schmidt2015informative}, \citet{informativeSCIBrannathKlugeScharpenberg}, and \citet{klugeBrannath2026GSD}.
We assume, like in the mentioned literature, that for each $i\in I=\{1,\ldots,m\}$ and parameter value $\vartheta_i\in\Theta_i=\mathbb{R}$ a marginal p-value $0<p_i(\vartheta_i)<1$  
for testing $H^{(\vartheta_i)}_i: \theta_i=\vartheta_i$ exists, that satisfies 
$\mathbb{P}_\theta\big(p_i(\theta_i)\le\alpha\big)\le\alpha$ for all $\alpha\in (0,1)$ and $\theta\in\Theta$. We also assume that these p-values have the following additional properties:  
   \begin{itemize}
        \item[(i)] $p_i(\vartheta_i)$ is increasing and continuous in each $\vartheta_i\in\mathbb {R}$,
        \item[(ii)] $\lim_{\vartheta_i\rightarrow -\infty} p_i(\vartheta_i) = 0$ 
        %\item[(iii)] 
        and $\lim_{\vartheta_i\rightarrow\infty} p_i(\vartheta_i)>\alpha$. 
        % In particular, $\left(p_i(\theta_i)=0\right)\Leftrightarrow\left(\theta_i=-\infty\right)$
    \end{itemize}
Property (i) typically follows when using one-sided p-values for testing $H^{(\vartheta_i)}_i: \theta_i=\vartheta_i$ against $K^{(\vartheta_i)}_i:\theta_i>\vartheta_i$. This and the other properties are satisfied for many commonly used (asymptotic) Gauss- or t-tests.

Like in the discrete case, we assume for all $i\in I$ positive weights $w_i(\theta)$, $\theta\in \mathbb{R}^m$, 
that depend now on $\theta$ continuously, are non-increasing in $\theta_i$ and non-decreasing in all $\theta_j$
for $j\not= i$. Given these weights, we use the decision function $\psi_\theta=\mathbf{1}_{\{p(\theta)\le \alpha\}}$ with  $p(\theta)=\min_{i=1}^m p_i(\theta_i)/w_i(\theta)$.
Since $p_i(\theta_i)/w_i(\theta)$ is increasing in $\theta_i$ and non-increasing in $\theta_j$ for all $j\not=i$, we obtain for the component $i$ which determines the minimum:
$$p(\theta)=p_i(\theta_i)/w_i(\theta) \le \alpha\ \Rightarrow\ 
p(\theta')\le p_i(\theta'_i)/w_i(\theta') \le \alpha \text{ for all }
\theta' \text{ with }\theta'_i\le \theta_i\text{ and }\theta'_j
\ge \theta_j\text{ for all  }j\not=i.$$    
This implies weak uniform-consonance of 
$(\psi_\theta=\mathbf{1}_{\{p(\theta)\le \alpha\}})_{\theta\in\mathbb{R}^m}$. Remember that our goal is to obtain the projection of $C:=\{\theta\in\mathbb{R}^m : p(\theta)>\alpha\}$. 

We do not need property (ii) and the continuity of $p_i(\vartheta_i)$ and $w_i(\theta)$ for weak uniform-consonance. We need these properties for the below presented iterative Algorithm 3 to provide lower approximations for the confidence bounds, $\lambda\le L$, that converge to $L$. As originally suggested in \citet{klugeBrannath2026GSD}, the algorithm also provides upper approximations $\nu\ge L$ 
that converge to $L$, which permits to stop the algorithm with a pre-specified precision $\varepsilon$, namely when $ \max_{i=1}^m |\lambda_i-\nu_i|\le \varepsilon$.
This generalizes and improves the algorithms suggested in \citet{brannath2014new}, \citet{schmidt2014informative, schmidt2015informative} and \citet{informativeSCIBrannathKlugeScharpenberg}.

The above-mentioned convergence and other properties of the lower and upper approximations in Algorithm 3 are verified in the Appendix. 
As mentioned for the previously presented algorithms, we may stop it with a maximum number of iterations before the precision $\varepsilon$ is reached. Reporting the finally received precisions $|\lambda_i - \nu_i|$, $i=1,\ldots,m$, or their maximum 
is then essential. In any case, the final $\lambda$ provides a conservative lower approximation of $L$ and the resulting SCI has coverage probability of at least $1-\alpha$.

To understand the role of the weak uniform-consonance property for Algorithm 3, note that in each iteration step, for the constellation $\tilde{\lambda}^{(i)}=(\lambda_1,\dots,\lambda_{i-1},\lambda^{\text{new}}_i,\lambda_{i+1}\dots,\lambda_m)$,
weak uniform-consonance in direction $i$ is obviously satisfied. Hence, instead of moving to the next grid or bisection point of Algorithm 2, the algorithm moves along each coordinate $i$ to the next point where weak uniform-consonance in this direction is still obvious. This implies $\tilde{\lambda}^{(i)}\le L$.
The convergence of the lower approximations must then be verified by additional arguments that are specific to the class of local tests under consideration and is addressed in the Appendix. 

\begin{algorithm}[H]
\caption{Algorithm for monotonously weighted Bonferroni tests with continuous parameter}\label{algorithm:Hommel}
\init{Fix thresholds $\varepsilon>0$, $1>\delta_0>0$ (typically small) and initialize $k\rightarrow 1$ and $\delta\to \delta_0$,
as well as the approximations, $\lambda \to \lambda_0$ and $\nu\to\nu_0$, with initial bounds $\lambda_0\leq L\le \nu_0\in C$ that satisfy
\begin{align}\label{eq:initA3}
p_{i}(\lambda_{i,0})\leq w_i(\lambda_0)\alpha \quad \text{and}\quad p_{i}(\nu_{i,0})\geq w_i(\nu_0)(\alpha+\delta_0)\quad \text{for all } i=1,\dots,m. \;
\end{align}
}
\iter{
\While{$\max_{i=1}^m |\lambda_i - \nu_i| \geq \varepsilon$}{
\For{each component $i=1,\dots,m$}{
Calculate $\lambda^{\text{new}}_i\geq\lambda_i$ such that
$$ p_{i}(\lambda^{\text{new}}_i)=w_i(\lambda_1,\dots,\lambda_{i-1},\lambda^{\text{new}}_i,\lambda_{i+1}\dots,\lambda_m)\alpha .\;$$

Calculate $\nu^{\text{new}}_i\leq\nu_i$ such that
$$ p_{i}(\nu^{\text{new}}_i)=w_i(\nu_1,\dots,\nu_{i-1},\nu_i^{\text{new}},\nu_{i+1},\dots,\nu_m)(\alpha+\delta). \;$$

Update $k\rightarrow k+1$, (increase) $\lambda_i\rightarrow\lambda^{\text{new}}_i$, (decrease) $\nu_i\rightarrow\nu_i^{\text{new}}$ and (decrease) $\delta \rightarrow \delta_0^k$\;
}
}}
\Return{Lower and upper approximation $\lambda$ and $\nu$ and accuracy $\max_{i=1}^m |\lambda_i - \nu_i|$.}
\end{algorithm}

We end this section with an example on how to achieve the initial lower and upper bounds $\lambda_0$ and $\nu_0$. %We are not aware of a general strategy for obtaining an initial lower bound, however, 
Choosing $\delta_0<1-\alpha$, the initial lower bound $\nu_0$ can be easily determined by the component-wise non-adjusted confidence bound at the level $\alpha+\delta_0$, i.e.\ for all $i=1,\ldots,m$ the 
$\nu_{0,i}$ satisfying $p_i(\nu_{0,i})=\alpha+\delta_0$.
As in the discrete case, a lower bound can be easily determined under the assumption that there exists for all components $i$ a $\theta_i^{*}\in\mathbb{R}$ such that $w_i(\theta)\geq w_{i0}>0$ for all $\theta\in\Theta$ with $\theta_i\leq\theta_i^*$. The initial value $\lambda_{i,0}$ can then be determined as the minimum of $\theta_i^*$ and the lower one-sided confidence bound at level $w_{i0}\alpha$, i.e. 
\begin{align*}
    \lambda_{i,0}:=\min\{\theta_i^*, p_i^{-1}(w_{i0}\alpha)\}
\end{align*}
where $p_i^{-1}(w_{i0}\alpha)$ is the unique solution of $p_i(\theta'_{i})=w_{i0}\alpha$.
This is particularly the case for the weighted monotone Bonferroni tests suggested in \citet{informativeSCIBrannathKlugeScharpenberg} for obtaining informative simultaneous SCI that are close to a given graphical test procedure.

We argue that $\lambda_0$ is indeed a valid starting vector fulfilling the requirements from Algorithm~\ref{algorithm:Hommel}. For this, it must be shown that for all components $i$ and all $\theta'\in\Theta$ with $\theta'_i\leq\lambda_{i,0}$ the parameter $\theta'$ is not contained in the confidence set $C$, i.e. $p(\theta')=\min_{i=1}^m p_i(\theta_i')/w_i(\theta')\leq\alpha$. This is fulfilled if for all $\theta'$ with $\theta_i'\leq\lambda_{i,0}$ we have $p_i(\theta'_i)\leq w_i(\theta')\alpha$. The latter follows from:
$$p_i(\theta_i')\leq p_i(\lambda_{i,0})\leq p_i(p_i^{-1}(w_{i0}\alpha))= w_{i0}\alpha\leq w_i(\theta')\alpha$$
where the last inequality follows from the fact that $\theta_i'\leq\lambda_{i,0}\leq\theta_i^*$ and our assumption on the weights.

\subsection{Algorithm for situations with continuous and discrete parameters}

In some applications, we may have a mixture of discrete and continuous parameters, for example, when focusing on hypothesis tests in some of the parameters (e.g.\ reducing potentially continuous parameters to binary) and aiming for informative confidence intervals for the other ones. In this case, one could include the discrete components in Algorithm \ref{alg_abstract_continuous} with its values as grid points, whereby it is sufficient to verify weak consonance in direction $j$ for each discrete component $j$ and parameter $\lambda\in\Theta$. For the continuous components we still need to verify the stronger weak uniform-consonant property. The \emph{Grid Traversal}-part of Algorithm \ref{alg_abstract_continuous} is then performed for all components until for all discrete components the confidence bounds are determined exactly. If for a continuous component the approximation accuracy $\varepsilon>0$ has then not been reached, the \emph{Bisection Search} can be performed only for the continuous components.

How a combination of Algorithm \ref{alg_abstract_discrete} and Algorithm \ref{algorithm:Hommel} could look like remains an open question. As for the use of Algorithm \ref{algorithm:Hommel} for continuous parameters alone, one could move along each coordinate $i$ to the next point where weak uniform-consonance (continuous parameter) or weak consonance (discrete parameter) in this direction is still obvious. The derivation of sufficient conditions such that it is stepped far enough and the approximations converge from below against the confidence bounds remain an unresolved issue.

\section{Summary and discussion}
 
Inspired by the computational equivalence between the closed testing and partitioning principle, we have extended the concept of consonance for closed tests to consonance properties for the partitioning principle. The goal of this  extension is to obtain efficient and feasible algorithms for the implementation of the partitioning principle. As a running example we have considered the extension of the consonant weighted Bonferroni closed tests in \citet{hommel2007powerful} to the partitioning principle with discrete and continuous parameter. Our work improves and sheds light on algorithms suggested in \citet{brannath2014new}, \citet{schmidt2014informative, schmidt2015informative}, \citet{informativeSCIBrannathKlugeScharpenberg}, and \citet{klugeBrannath2026GSD} for obtaining informative simultaneous confidence intervals.

It is important to note that the concept of weak (uniform) consonance introduced in this work does not immediately imply the rejection of a marginal hypothesis, and therefore is not just an extension of the consonance concept for closed tests. Weak (uniform) consonance is weaker than (uniform) consonance and thereby easier to achieve and verify, which is helpful for applications. Based on the weak consonance property, we have provided efficient computational algorithms for discrete parameters, and for continuous parameters, feasible as well as efficient algorithms under the stronger weak uniform consonance property. The latter provide an conservative and anti-conservative approximation for each lower confidence bound with a pre-defined precision. 

In the continuous parameter case, weak \textit{uniform}-consonance is required to be able to extend given rejections on grid points to the continuum between them. In contrast to the discrete case, the presented algorithms also need to move along already rejected regions and therefore require a method to conclude weak uniformly consonance in a specific direction. The determination of weakly uniform consonance in a given direction is rather easy for an extension of Holm's procedure to continuous parameter, but it seems unfeasible for the general class of weighted Bonferroni tests. Therefore, we have suggested an alternative algorithm for the latter class of local tests, extending and improving algorithms suggested in \citet{brannath2014new}, \citet{schmidt2014informative, schmidt2015informative} and \citet{informativeSCIBrannathKlugeScharpenberg}.

In \citet{informativeSCIBrannathKlugeScharpenberg} and \citet{klugeBrannath2026GSD} graphical tests with gatekeepers were also considered, where some parameters (e.g.\ for secondary endpoints) are tested only after rejection of specific null hypotheses for other parameters (e.g.\ for primary endpoints). This is a situation that is related to the above-discussed cases with discrete and continuous components of the parameter vector, but is more complex because it includes cases in which the original and a discretized version of the same parameter are considered, violating the variational independence assumptions made throughout this paper. Including gatekeeping requires modifications of Algorithm 2 and 3. For the specific case of informative SCI for graphical tests, algorithms similar to Algorithm 3 (without upper approximations) that can cope with gatekeeping are given in \citet{informativeSCIBrannathKlugeScharpenberg} and extended to graphical group sequential tests (with upper approximations) in \citet{klugeBrannath2026GSD}. The development of similar algorithms for the more general continuously weighted Bonferroni tests with gatekeeping is yet an unresolved issue.      

A further open research question is the derivation of efficient algorithms for restricted parameters, like the ones underlying all pair-wise comparisons of multiple treatment groups. All-pairwise comparisons have recently been suggested for clinical trials without a control group (\citealp{Burnett2026}). Another open research issue is the derivation of general and efficient algorithms for the implementation of the partitioning principle with level exhaustive local tests that  account for the joint distribution of the underlying test statistics. This is an even more challenging task as the weak (uniform)-consonance property is easily violated when using individual weights that do not only depend on the corresponding but also other parameter components (see e.g.\ \citealp{brannath2014new}).  

We end this work with a general discussion on the utility of simultaneous confidence intervals (SCI) for clinical trials. We consider them as important element of frequentists analyses with multiple confirmatory goals. In general, confidence intervals are more informative than hypothesis tests, and they do not suffer from well-known issues with p-values that can be small even under irrelevant treatment effects.  One could even go as far as to conclude that hypothesis tests and p-values are dispensable when providing (simultaneous) confidence intervals, because claims on the targeted null hypotheses can easily be read off (simultaneous) confidence intervals (and even more, if sufficiently informative). When asking for multiplicity corrections with multiple tests, asking for simultaneous coverage probabilities for multiple parameters is very natural, in particular, because -- as we have seen -- FWER control is mathematically equivalent to simultaneous coverage of the corresponding (less informative) binary indicator variables. 
Like for the null hypothesis, accounting for different preferences for different parameter and different parameter values can be a valuable task for clinical trials. For example, achieving more power for excluding small (and thereby less realistic) parameter values for the price of reducing power for larger ones (that are anyhow less likely to be excluded), appears to be a reasonable strategy. This could be the motivation for the use of the above-described continuously weighted Holm tests and corresponding SCI based on the partitioning principle and related projection algorithm. The informative SCI suggested in  \citet{brannath2014new}, \citet{schmidt2014informative, schmidt2015informative}, \citet{informativeSCIBrannathKlugeScharpenberg} and \citet{klugeBrannath2026GSD}, that builds on graphical tests, are motivated by the intention to account for such and other preferences in the construction of SCI. These methods could be considered as a starting point for a more interval (and less binary) based frequentist analysis of clinical data to be extended by methodologies that account for preference independently of initial null hypotheses, e.g.\ via optimality considerations based on parameter dependent gains and losses. We believe that this is a highly valuable topic for future research.

\section*{Acknowledgement}
The authors gratefully acknowledge the support of the Leibniz ScienceCampus Bremen Digital Public Health (www.digital-public-health.de), which is jointly funded by the Leibniz Association (W72/2022), the Federal State of Bremen, and the Leibniz Institute for Prevention Research and Epidemiology – BIPS.

\bibliographystyle{apalike}
\bibliography{references} 

\appendix
\renewcommand{\thesection}{\Alph{section}}

\titleformat{\section}
{\normalfont\large\bfseries}
{Appendix \thesection:}
{0.5em}
{}

\section{Mathematical results for Algorithm~\ref{alg_abstract_discrete}}
\begin{lemma}\label{lem:LemmaForAlgoDiscrete} 
In each step of Algorithm~\ref{alg_abstract_discrete} with $\psi_{\lambda}=1$ we have $\lambda\le L$, i.e.\  $\lambda_j\leq L_j$ for all $j=1,\dots,m$.
\end{lemma}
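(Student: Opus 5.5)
The plan is to argue by induction over the iterations of Algorithm~\ref{alg_abstract_discrete}, proving the stronger invariant that at the start of every iteration (whether or not $\psi_\lambda=1$) we have $\lambda\le L$. Componentwise, this means that for every $j$ and every $l<k_j$ we have $\varphi^{\text{proj}}_{jl}=1$, i.e.\ $\psi_{\theta'}=1$ for all $\theta'$ with $\theta'_j\le\theta_{jl}$. By \eqref{eq:proj_disc}, this forces $L_j\ge\theta_{j,k_j}=\lambda_j$. The base case is trivial, since initially $\lambda_j=\theta_{j0}$ is the smallest value in $\Theta_j$ and hence $\lambda_j\le L_j$; the set of indices $l<0$ is empty.

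For the induction step, suppose $\lambda\le L$ with the invariant in this stronger form, and that $\psi_\lambda=1$. I first show $\min_{\theta'\le\lambda}\psi_{\theta'}=1$. Take any $\theta'\le\lambda$ with $\theta'\neq\lambda$. Then some coordinate $j$ has $\theta'_j<\lambda_j$, so $\theta'_j=\theta_{jl}$ for some $l<k_j$, and the invariant gives $\psi_{\theta'}=1$. For $\theta'=\lambda$ we have $\psi_\lambda=1$ by hypothesis. Weak consonance (Definition~\ref{eq:weak_cons_disc}) then yields an index $i$ with $\min_{\theta'\in G^{(i)}_\lambda}\psi_{\theta'}=1$, so the algorithm's step is well defined.

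Next I show $\varphi^{\text{proj}}_{i,k_i}=1$, i.e.\ every $\theta'$ with $\theta'_i\le\lambda_i$ is rejected. Split into two cases. If $\theta'_i<\lambda_i$, the invariant for coordinate $i$ applies. If $\theta'_i=\lambda_i$, there are two subcases. In the first, $\theta'_j\ge\lambda_j$ for all $j\ne i$, so $\theta'\in G^{(i)}_\lambda$ and $\psi_{\theta'}=1$. In the second, some $j\ne i$ has $\theta'_j<\lambda_j$, and the invariant for coordinate $j$ gives $\psi_{\theta'}=1$. After the update $k_i\to k_i+1$, the invariant therefore holds for all $l<k_i+1$, and it is unchanged for the other coordinates. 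This gives $\lambda^{\text{new}}_i=\theta_{i,k_i+1}\le L_i$, because $L_i$ is the minimal $\theta_{il}$ with $\varphi^{\text{proj}}_{il}=0$ and all smaller indices are now known to be rejected. Monotonicity of $\varphi^{\text{proj}}_{il}$ in $l$ is not even needed here.

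The main subtlety is that the bare statement $\lambda\le L$ is not strong enough to carry the induction: knowing only $\lambda_j\le L_j$ does not directly give rejection of all points below $\lambda$ off the grid path. I would therefore formulate the invariant explicitly as rejection of the half-spaces $\{\theta':\theta'_j<\lambda_j\}$ for all $j$. I would also need to check that it covers the case split $\theta'_i=\lambda_i$ versus $\theta'_j<\lambda_j$ for some $j$, which is exactly where the definition of $G^{(i)}_\lambda$ fits the complement of those half-spaces.
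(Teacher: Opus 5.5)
Your proof is correct and follows essentially the paper's route. Like the paper, you induct over the algorithm's steps and use weak consonance at $\lambda$ together with the rejection of $G^{(i)}_\lambda$ to get $\theta_{i,k_i}<L_i$, hence $\theta_{i,k_i+1}\le L_i$, and your case split for $\theta'_i\le\lambda_i$ spells out what the paper leaves implicit. One note: your ``stronger'' invariant is actually equivalent to $\lambda\le L$ by the minimality in the definition of $L_j$ in \eqref{eq:proj_disc}, so your closing claim that $\lambda\le L$ alone cannot carry the induction is unfounded, though harmless.
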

\begin{proof}

    We show the statement by induction in algorithm's steps. By our assumptions in Section~\ref{sec:WCandAfDP}, we have $\lambda\leq L$ (component-wise) for the initial $\lambda=(\theta_{10},\dots,\theta_{m0})$ . Making the induction assumption that at the current  step we have $\lambda_j=\theta_{j k_j}\le L_j$ for all $j=1,\ldots,m$, the algorithm only continues with the next step if, we find $i$ such that $\psi_{\lambda}=1$ 
     for $\lambda_i=\theta_{i k_i}$ and the area $G_{\lambda}^{(i)}$ is not included in $C$. This implies $\theta_{i k_i}<L_i$ and that the update in the $i$-th component fulfils $\theta_{i k_i+1}\leq L_i$. Because the other components remain unchanged, we have for the updated $\lambda$ that $\lambda\leq L$ component-wise.
\end{proof}

\section{Continuously weighted Holm procedure}

We assume $m$ parameter $\theta_i\in \mathbb{R}$ and for each parameter, a continuous and non-increasing function 
$a_i:\vartheta_i\in\mathbb{R}\to (0,\infty)$. We also consider weighted Bonferroni tests with continuous weights $w_i(\theta)=a_i(\theta_i)/\sum_{j=1}^m a_j(\theta_j)$, and individual p-values $p_i(\theta_i)$ that are continuous and increasing, leading to the local p-values $p(\theta)=\min_{i=1}^m p_i(\theta_i)/w_i(\theta)$. Note that by these assumptions each $p_j(\tilde{\theta}_j)/a_j(\tilde{\theta}_j)$ is increasing in $\tilde{\theta}_j$. The following result provides a method to show that    
$(\psi_\theta)_{\theta\in\mathbb{R}^n}$ with
$\psi_\theta
=\mathbf{1}_{\{p(\theta)\le \alpha\}}$ is weakly uniform-consonance at $\theta$ in a given direction $i$.
\begin{theorem} \label{th:cwHolm}
Let $\theta\in\mathbb{R}^n$, $i\in \{1,\ldots,m\}$ and $\tilde{\theta}^{(i)}=(\tilde{\theta}^{(i)}_1,\ldots,\tilde{\theta}^{(i)}_m)$ be defined by $\tilde{\theta}^{(i)}_j=\theta_j$ if 
$p_j(\theta_j)/a_j(\theta_j) > p_i(\theta_i)/a_i(\theta_i)$
and defined as the solution of
$$p_j(\tilde{\theta}_j^{(i)})/a_j(\tilde{\theta}_j^{(i)})=p_i(\theta_i)/a_i(\theta_i)\quad\text{if}\quad  
p_j(\theta_j)/a_j(\theta_j) \le p_i(\theta_i)/a_i(\theta_i).
$$
The above introduced family of local tests $(\psi_\theta)_{\theta\in\mathbb{R}^n}$ is weakly uniform-consonance at $\theta$ in direction $i$ if and only if $p(\tilde{\theta}^{(i)})\le  \alpha$.
\end{theorem}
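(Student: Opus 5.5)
The plan is to rewrite the local p-value in product form and then show that $\tilde{\theta}^{(i)}$ maximizes $p$ over the set $D^{(i)}_\theta$ from \eqref{eq:D_cons}. Write $r_j(\vartheta_j):=p_j(\vartheta_j)/a_j(\vartheta_j)$, which is increasing in $\vartheta_j$ as noted before the theorem, and $S(\theta'):=\sum_{k=1}^m a_k(\theta'_k)$. Since $w_j(\theta')=a_j(\theta'_j)/S(\theta')$, we have
\[
p(\theta')=S(\theta')\,\min_{j} r_j(\theta'_j)\qquad\text{for all }\theta'\in\mathbb{R}^m .
\]
Set $c:=r_i(\theta_i)$. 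By definition, weak uniform-consonance at $\theta$ in direction $i$ means $p(\theta')\le\alpha$ for all $\theta'\in D^{(i)}_\theta$. So it suffices to prove two things: (a) $\tilde{\theta}^{(i)}\in D^{(i)}_\theta$, and (b) $p(\theta')\le p(\tilde{\theta}^{(i)})$ for all $\theta'\in D^{(i)}_\theta$. Then (a) gives the ``only if'' direction and (b) gives the ``if'' direction.

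For (a), take first $j=i$. The condition $r_i(\theta_i)\le c$ holds trivially, and the defining equation $r_i(\tilde{\theta}_i^{(i)})=c$ has the unique solution $\tilde{\theta}^{(i)}_i=\theta_i$ by strict monotonicity. For $j\neq i$, either $\tilde{\theta}^{(i)}_j=\theta_j$, or $r_j(\theta_j)\le c=r_j(\tilde{\theta}^{(i)}_j)$, which forces $\tilde{\theta}^{(i)}_j\ge\theta_j$. In both cases $\tilde{\theta}^{(i)}\in D^{(i)}_\theta$. Along the way we record that $r_j(\tilde{\theta}^{(i)}_j)\ge c$ for every $j$, with equality at $j=i$. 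Hence $\min_j r_j(\tilde{\theta}^{(i)}_j)=c$ and
\[
p(\tilde{\theta}^{(i)})=c\,S(\tilde{\theta}^{(i)}).
\]
I will take the existence of the root as given, as the statement implicitly does; it follows from continuity together with a suitable range of $r_j$.

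For (b), fix $\theta'\in D^{(i)}_\theta$ and let $c':=\min_j r_j(\theta'_j)$. Since $\theta'_i\le\theta_i$, we get $c'\le r_i(\theta'_i)\le c$. Also $c'\le r_k(\theta'_k)$ for every $k$. Hence each summand of $p(\theta')=\sum_k a_k(\theta'_k)c'$ satisfies
\[
a_k(\theta'_k)\,c'\le\min\{a_k(\theta'_k)\,c,\;p_k(\theta'_k)\}.
\]
The key step is to bound this by $a_k(\tilde{\theta}^{(i)}_k)c$ for every $k$, which I split into cases.

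First suppose $\tilde{\theta}^{(i)}_k$ was obtained as a root, so that $p_k(\tilde{\theta}^{(i)}_k)=a_k(\tilde{\theta}^{(i)}_k)c$. If $\theta'_k\le\tilde{\theta}^{(i)}_k$, monotonicity of $p_k$ gives $p_k(\theta'_k)\le a_k(\tilde{\theta}^{(i)}_k)c$. If $\theta'_k\ge\tilde{\theta}^{(i)}_k$, the fact that $a_k$ is non-increasing gives $a_k(\theta'_k)c\le a_k(\tilde{\theta}^{(i)}_k)c$.

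Otherwise $\tilde{\theta}^{(i)}_k=\theta_k$ with $k\neq i$. Then $\theta'_k\ge\theta_k$, and again $a_k(\theta'_k)c\le a_k(\tilde{\theta}^{(i)}_k)c$.

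Summing over $k$ yields $p(\theta')\le c\,S(\tilde{\theta}^{(i)})=p(\tilde{\theta}^{(i)})$, which completes the proof.

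The main obstacle is finding this termwise bound, namely the idea of comparing each summand with $a_k(\tilde{\theta}^{(i)}_k)c$ using the two competing bounds $a_k c$ and $p_k$. This is the $m$-dimensional analogue of the identity $p(\tilde{\theta}'_1,\theta'_2)=p_1+p_2$ used in Section~\ref{sec:cwholm}. Once this bound is in place, the rest is bookkeeping.
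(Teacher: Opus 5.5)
Your proof is correct, and it reaches the paper's key intermediate result, Lemma~\ref{lem:a2} ($p(\tilde{\theta}^{(i)})=\max_{\theta'\in D^{(i)}_\theta}p(\theta')$), by a different and more direct route.

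The paper's proof goes through four steps:
\begin{enumerate}
\item It renumbers the coordinates so that $p_1(\theta_1)/a_1(\theta_1)\le\cdots\le p_m(\theta_m)/a_m(\theta_m)$.
\item It reduces to $\theta'_l=\theta_l$ for $l>i$, using that $a_l$ is non-increasing.
\item It replaces $\theta'$ by a ``balanced'' point $\check{\theta'}^{(h)}$ in which all ratios $p_j/a_j$, $j\le i$, equal the current minimum; this can only increase $p$.
\item It invokes Lemma~\ref{lem:a1}, a monotonicity statement for $p$ on the set $M_J$. There $p=\sum_{l\in\bar{J}}a_l\,c+\sum_{j\in J}p_j$ is increasing in the balanced coordinates.
\end{enumerate}

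You avoid the ordering, the auxiliary point and Lemma~\ref{lem:a1} entirely. You bound each summand $a_k(\theta'_k)c'$ by the smaller of $a_k(\theta'_k)c$ and $p_k(\theta'_k)$, choosing which according to whether $\theta'_k$ lies above or below $\tilde{\theta}^{(i)}_k$. This dominates $p(\theta')$ coordinatewise against exactly the summands of $p(\tilde{\theta}^{(i)})=\sum_k a_k(\tilde{\theta}^{(i)}_k)c$, which equal $p_k(\tilde{\theta}^{(i)}_k)$ on the balanced coordinates. As a result, ties among the ratios need no special handling, and strict monotonicity is used only in your step (a) to place $\tilde{\theta}^{(i)}$ in $D^{(i)}_\theta$.

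What the paper's route buys is a structural statement: $p$ restricted to $M_J$ is strictly monotone in the balanced block, in ``if and only if'' form. This generalizes the two-dimensional identity $p=p_1+p_2$ from Section~\ref{sec:cwholm}. For the theorem itself, however, your termwise argument is shorter and self-contained.
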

We prove the theorem with following two lemmas. 
\begin{lemma}\label{lem:a1}
For $J\subseteq\{1,\ldots,m\}$ let $\bar{J}:=\{1,\ldots,m\}\setminus J$ and 
$$M_J:=\Big\{\theta=(\theta_1,\ldots,\theta_m)\in \mathbb{R}^m:  \
\min_{i=1}^m \frac{p_i(\theta_i)}{a_i(\theta_i)}=
\min_{l\in J} \frac{p_l(\theta_l)}{a_l(\theta_l)}=\frac{p_j(\theta_j)}{a_j(\theta_j)} \text{ for all } j\in J \Big\}.$$  
If $\theta,\tilde{\theta}\in M_J$ with $\theta_i=\tilde{\theta}_i$ for all $i\in \bar{J}$, then 
$p(\theta)>p(\tilde{\theta})$ if and only if $\theta_j>\tilde{\theta}_j$ for at least one $j\in J$. Moreover, the latter implies  
$\theta_l>\tilde{\theta}_l$ for all $l\in J$.

\end{lemma}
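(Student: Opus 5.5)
Let $q_l(\vartheta_l):=p_l(\vartheta_l)/a_l(\vartheta_l)$, which is increasing and continuous in $\vartheta_l$. The plan is to rewrite $p(\theta)$ on $M_J$ in closed form and read off both claims from the monotonicity of the $q_l$.

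\emph{Step 1: closed form for $p$.} Since $w_l(\theta)=a_l(\theta_l)/S(\theta)$ with $S(\theta)=\sum_{k=1}^m a_k(\theta_k)$, we have
$$p(\theta)=S(\theta)\,\min_{l} q_l(\theta_l).$$
On $M_J$ this minimum equals the common value $c(\theta):=q_j(\theta_j)$ for every $j\in J$. Hence
$$p(\theta)=c(\theta)\sum_{j\in J}a_j(\theta_j)+c(\theta)\sum_{k\in\bar J}a_k(\theta_k)=\sum_{j\in J}p_j(\theta_j)+c(\theta)\,A_{\bar J},$$
where $A_{\bar J}=\sum_{k\in\bar J}a_k(\theta_k)$. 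This quantity is the same for $\theta$ and $\tilde\theta$, because they agree on $\bar J$. This is the same identity used in the two-dimensional argument of Section~\ref{sec:cwholm}.

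\emph{Step 2: the "moreover" part.} Suppose $\theta_j>\tilde\theta_j$ for some $j\in J$. Strict monotonicity of $q_j$ gives $c(\theta)=q_j(\theta_j)>q_j(\tilde\theta_j)=c(\tilde\theta)$. Then for any other $l\in J$ we have $q_l(\theta_l)=c(\theta)>c(\tilde\theta)=q_l(\tilde\theta_l)$, so $\theta_l>\tilde\theta_l$, again by strict monotonicity. Thus all $J$-components are strictly larger. By the same reasoning, the $J$-components are either all larger, all equal, or all smaller.

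\emph{Step 3: the equivalence.} If all $\theta_j>\tilde\theta_j$ for $j\in J$, then each $p_j(\theta_j)>p_j(\tilde\theta_j)$, and $c(\theta)>c(\tilde\theta)$ with $A_{\bar J}\ge0$. Step 1 then yields $p(\theta)>p(\tilde\theta)$; strictness already comes from the sum over the non-empty $J$. For the converse, if no $j\in J$ has $\theta_j>\tilde\theta_j$, then by Step 2 (with the roles swapped) either all components are equal, giving $p(\theta)=p(\tilde\theta)$, or all are smaller, giving $p(\theta)<p(\tilde\theta)$. In both cases $p(\theta)>p(\tilde\theta)$ fails.

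The only delicate points are bookkeeping. First, $J$ must be non-empty, which is implicit in the definition of $M_J$. Second, the definition of $M_J$ must force the overall minimum to be attained in $J$, so that the factor multiplying $A_{\bar J}$ is indeed $c(\theta)$. The main obstacle is correctly establishing the closed form in Step 1; everything else follows directly from strict monotonicity of $p_j$ and $q_j$ and the assumption $a_k>0$.
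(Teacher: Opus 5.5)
Your proof is correct and follows the paper's argument: on $M_J$ you use the same closed form $p(\theta)=\sum_{j\in J}p_j(\theta_j)+c(\theta)\sum_{k\in\bar J}a_k(\theta_k)$, and you use strict monotonicity of $p_l/a_l$ to pass a strict inequality from one $J$-coordinate to all of them. The only cosmetic difference is in the ``only if'' direction: the paper reads it off from the coordinatewise monotonicity of this closed form in the $J$-components, whereas you get it from the all-larger/all-equal/all-smaller trichotomy of your Step 2.
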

\begin{proof} For $\theta\in M_J$ and arbitrary $k\in J$ we get   
$$ p(\theta)=\sum_{i=1}^m a_i(\theta_i) \min_{l\in J} \frac{p_l(\theta_l)}{a_l(\theta_l)} 
=\sum_{i\in\bar{J}} a_i(\theta_i) \frac{p_k(\theta_k)}{a_k(\theta_k)} + \sum_{j\in J} p_j(\theta_j),$$
which is increasing in each $\theta_j$, $j\in J$. Hence, $p(\theta)>p(\tilde{\theta})$ for $\theta,\tilde{\theta}\in M_J$ with $\theta_i=\tilde{\theta}_i$ for all $i\in \bar{J}$ implies $\theta_j>\tilde{\theta}_j$ for at least one $j\in J$. 

On the other hand, $\theta,\tilde{\theta}\in M_J$ with $\theta_i=\tilde{\theta}_i$ for all $i\in \bar{J}$ and $\theta_l>\tilde{\theta}_l$ for all $l\in J$ implies $p(\theta)>p(\tilde{\theta})$.
Because each $p_l(\theta_l)/a_l(\theta_l)$ is increasing in $\theta_l$, we have that $\theta_j>\tilde{\theta}_j$ for at least one $j\in J$ implies
$$p_l(\theta_l)/a_l(\theta_l)=p_j(\theta_j)/a_j(\theta_j)> 
p_j(\tilde{\theta}_j)/a_j(\tilde{\theta}_j)=p_l(\tilde{\theta}_l)/a_l(\tilde{\theta}_l) \quad \text{ for all }l\in J,$$ 
which in turn implies that $\theta_l>\tilde{\theta}_l$ for all $l\in J$. So,  $\theta_j>\tilde{\theta}_j$ for at least one $j\in J$ implies $p(\theta)>p(\tilde{\theta})$.
\end{proof}

The next lemma directly implies Theorem~\ref{th:cwHolm}.
\begin{lemma}\label{lem:a2}
Let $\theta\in\mathbb{R}^n$, $i\in \{1,\ldots,m\}$ and $\tilde{\theta}^{(i)}$ as in Theorem~\ref{th:cwHolm}. Then 
$p(\tilde{\theta}^{(i)})=\max_{\theta'\in D^{(i)}_\theta} p(\theta')$ for 
$D^{(i)}_\theta:=\{\theta'\in \mathbb{R}^m:\theta'_i\le\theta_i \text{ and } \theta'_j\geq\theta_j \text{ for all }j\not=i\}$.
\end{lemma}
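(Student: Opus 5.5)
The plan is to prove the two inequalities $p(\tilde{\theta}^{(i)})\le \sup_{\theta'\in D^{(i)}_\theta}p(\theta')$ and $p(\theta')\le p(\tilde{\theta}^{(i)})$ for every $\theta'\in D^{(i)}_\theta$. Both rest on the representation
$$p(\theta')=\min_{j}\frac{p_j(\theta'_j)}{w_j(\theta')}=\Big(\sum_{j=1}^m a_j(\theta'_j)\Big)\, \min_{j} r_j(\theta'_j), \qquad r_j:=p_j/a_j .$$
Throughout I will use that each $r_j$ is increasing and each $a_j$ is non-increasing. I write $c:=r_i(\theta_i)$ and $J:=\{j: r_j(\theta_j)\le c\}$, which contains $i$. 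As in Theorem~\ref{th:cwHolm}, I take for granted that the defining equations for $\tilde{\theta}^{(i)}_j$, $j\in J$, have solutions.

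For the first inequality I check that $\tilde{\theta}^{(i)}\in D^{(i)}_\theta$. We have $\tilde{\theta}^{(i)}_i=\theta_i$. For $j\notin J$ we have $\tilde{\theta}^{(i)}_j=\theta_j$. For $j\in J\setminus\{i\}$ we have $r_j(\tilde{\theta}^{(i)}_j)=c\ge r_j(\theta_j)$, so monotonicity of $r_j$ gives $\tilde{\theta}^{(i)}_j\ge\theta_j$. I also record the structure of $\tilde{\theta}^{(i)}$: $r_j(\tilde{\theta}^{(i)}_j)=c$ for $j\in J$ and $r_j(\tilde{\theta}^{(i)}_j)>c$ for $j\notin J$. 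Hence $\tilde{\theta}^{(i)}\in M_J$ in the notation of Lemma~\ref{lem:a1}, and the computation in that lemma's proof gives
$$p(\tilde{\theta}^{(i)})=\sum_{j\in J}p_j(\tilde{\theta}^{(i)}_j)+c\sum_{j\notin J}a_j(\theta_j).$$

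For the reverse inequality, fix $\theta'\in D^{(i)}_\theta$ and set $m':=\min_j r_j(\theta'_j)$. Since $\theta'_i\le\theta_i$, we get $m'\le r_i(\theta'_i)\le c$. Writing $p(\theta')=\sum_j a_j(\theta'_j)m'$, I bound the terms one by one against the decomposition above.

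For $j\notin J$, we have $\theta'_j\ge\theta_j$. So $a_j(\theta'_j)m'\le a_j(\theta_j)c$.

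For $j\in J$, I use the two bounds $a_j(\theta'_j)m'\le a_j(\theta'_j)r_j(\theta'_j)=p_j(\theta'_j)$ and $a_j(\theta'_j)m'\le a_j(\theta'_j)c$, and split into two cases.
\begin{itemize}
\item If $\theta'_j\le\tilde{\theta}^{(i)}_j$, the first bound and monotonicity of $p_j$ give $a_j(\theta'_j)m'\le p_j(\tilde{\theta}^{(i)}_j)$.
\item If $\theta'_j\ge\tilde{\theta}^{(i)}_j$, the second bound and monotonicity of $a_j$ give $a_j(\theta'_j)m'\le a_j(\tilde{\theta}^{(i)}_j)c=p_j(\tilde{\theta}^{(i)}_j)$, because $r_j(\tilde{\theta}^{(i)}_j)=c$.
\end{itemize}
For $j=i$ always the first case applies. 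Summing over $j$ yields $p(\theta')\le p(\tilde{\theta}^{(i)})$, so the maximum over $D^{(i)}_\theta$ is attained at $\tilde{\theta}^{(i)}$.

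The main obstacle I expect is finding this term-by-term comparison. A naive route, such as moving the components of $\theta'$ one at a time and invoking Lemma~\ref{lem:a1}, runs into the fact that the index set attaining the minimum changes along the path. The case split on $\theta'_j$ versus $\tilde{\theta}^{(i)}_j$ avoids this, and Lemma~\ref{lem:a1} is then only needed for the closed-form expression of $p(\tilde{\theta}^{(i)})$. Theorem~\ref{th:cwHolm} follows immediately, since weak uniform-consonance in direction $i$ means exactly $\max_{D^{(i)}_\theta}p\le\alpha$.
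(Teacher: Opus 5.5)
Your proof is correct, and it takes a genuinely different route from the paper.

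\textbf{The paper's route.} The paper proceeds by successive reduction.
\begin{enumerate}
\item It renumbers so that $p_1(\theta_1)/a_1(\theta_1)\le\cdots\le p_m(\theta_m)/a_m(\theta_m)$.
\item It notes that for $\theta'\in D^{(i)}_\theta$ the minimal ratio is attained among the indices $j\le i$.
\item It uses monotonicity of the $a_l$ to reduce to $\theta'_l=\theta_l$ for $l>i$.
\item It pushes every component $j\le i$, $j\neq h$, down to the level of the minimal ratio $p_h(\theta'_h)/a_h(\theta'_h)$. This can only increase $p$, and it produces an auxiliary point $\check{\theta'}^{(h)}$ in the set $M_J$ with $J=\{1,\dots,i\}$.
\item It compares $\check{\theta'}^{(h)}$ with $\tilde{\theta}^{(i)}$ through the ``if and only if'' monotonicity statement of Lemma~\ref{lem:a1}. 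The comparison rests on the fact that the $i$-th component of $\check{\theta'}^{(h)}$ does not exceed $\theta_i=\tilde{\theta}^{(i)}_i$.
\end{enumerate}

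\textbf{Your route.} You skip the renumbering, the auxiliary point and the comparison part of Lemma~\ref{lem:a1}. After writing $p(\tilde{\theta}^{(i)})$ in closed form, you bound $p(\theta')$ summand by summand. The key one-dimensional observation is that $\vartheta\mapsto\min\{p_j(\vartheta),\,c\,a_j(\vartheta)\}$, with your $c=p_i(\theta_i)/a_i(\theta_i)$, is the minimum of an increasing and a non-increasing function. It is therefore maximized at their crossing point $\tilde{\theta}^{(i)}_j$. This is essentially the two-parameter argument of Section~\ref{sec:cwholm} applied coordinatewise.

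\textbf{What each buys.} Your argument is shorter and more elementary. It handles ties (indices $j>i$ whose ratio equals $c$) without extra comment. It needs only the trivial closed-form identity for points of $M_J$. The paper's route, in exchange, isolates a structural monotonicity property of $p$ on the sets $M_J$. It also shows explicitly how $D^{(i)}_\theta$ can be reduced, step by step, to the maximizing point.
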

\begin{proof} We choose an arbitrary $\theta\in\mathbb{R}^n$ and renumber the parameter such that
\begin{align}\label{eq:wlog}
p_{1}(\theta_{1})/a_{1}(\theta_{1})\le p_{2}(\theta_2)/a_{2}(\theta_{2})\le \cdots \le 
p_{m}(\theta_{m})/a_{m}(\theta_{m}).
\end{align}
Note that $\tilde{\theta}^{(i)}\in D^{(i)}_\theta$ and therefore it is sufficient to show that $p(\theta')\le p(\tilde{\theta}^{(i)})$ for all $\theta'\in D^{(i)}_\theta$.

So let $\theta'\in D^{(i)}_\theta$ be arbitrary. Note that by \eqref{eq:wlog} and the definition of $D_{\theta}^{(i)}$ we have $p_{i}(\theta'_{i})/a_{i}(\theta'_{i})\le p_{i}(\theta_{i})/a_{i}(\theta_{i})\le p_{l}(\theta_{l})/a_{l}(\theta_{l})\le p_{l}(\theta'_{l})/a_{l}(\theta'_{l})$ for all $l> i$ (while 
$p_{j}(\theta'_{j})/a_{j}(\theta'_{j})$ for $j<i$ can be smaller than $p_{i}(\theta'_{i})/a_{i}(\theta'_{i})$). Therefore, we obtain %for all $\theta'\in D^{(i)}_\theta$:  
\begin{align}\label{eq:iqa1} %\nonumber
p(\theta') & =\sum_{l=1}^m a_l(\theta'_l)\cdot \min_{j\le i} p_j(\theta'_{j})/a_{j}(\theta'_{j}) 
\le \big[\sum_{j\le  i} a_{j}(\theta'_{j}) +\sum_{l=i+1}^m a_{l}(\theta_{l})\big]\cdot \min_{j\le i} p_{j}(\theta'_{j})/a_{j}(\theta'_{j})
\end{align}
whereby the second sum (from $i+1$ to $m$) must be set to zero if $i=m$, and the inequality becomes a equality. 
Hence, if $i<m$, it is sufficient to consider $\theta'\in D^{(i)}_{\theta}$ with $\theta'_l=\theta_l$ for $l> i$.

Let now $h\le i$ such that 
$p_{h}(\theta'_{h})/a_{h}(\theta'_{h})=\min_{j\le i} p_{j}(\theta'_{j})/a_{j}(\theta'_{j})
\le p_{i}(\theta'_{i})/a_{i}(\theta'_{i}).
$
Then \eqref{eq:iqa1} and the monotonicity of $a_l(\theta'_l)$ implies 
$$p(\theta') = 
\big[1+\sum_{j\le   i, j\not=h} a_{j}(\theta'_{j})/a_{h}(\theta'_{h}) +\sum_{l=i+1}^m a_{l}(\theta_{l})/a_{h}(\theta'_{h})\big] 
\cdot p_{h}(\theta'_{h}) 
\le p(\check{\theta'}^{(h)}),
$$
with $\check{\theta'}^{(h)}=(\check{\theta'_1}^{(h)},\ldots,\check{\theta'_m}^{(h)})$ defined by $\check{\theta'_h}^{(h)}=\theta'_{h}$, $\check{\theta'}^{(h)}_l=\theta_l'=\theta_l$ for all $l> i$,
and as the solution of
$$p_j(\check{\theta'_j}^{(h)})/a_j(\check{\theta'_j}^{(h)})=p_{h}(\theta'_{h})/a_{h}(\theta'_{h})\le p_{j}(\theta'_{j})/a_{j}(\theta'_{j})\quad\text{ for all }j\le i, j\neq h$$
where the inequality implies $\check{\theta'}^{(h)}_j\le\theta'_{j}$ for all $j\le i, j\neq h$. Note that for $\check{\theta'}^{(h)}$ we have as for $\theta'$ that $p_{h}(\check{\theta'_h}^{(h)})/a_{h}(\check{\theta'_h}^{(h)})=\min_{j\le i} p_{j}(\check{\theta'_j}^{(h)})/a_{j}(\check{\theta'_j}^{(h)})$.

Since 
$$p_{i}(\tilde{\theta}^{(i)}_{i})/a_{i}(\tilde{\theta}^{(i)}_i) 
= p_{i}(\theta_{i})/a_{i}(\theta_{i}) 
\ge p_{i}(\theta'_{i})/a_{i}(\theta'_{i})
\ge p_{i}(\check{\theta}^{(h)}_{i})/a_{i}(\check{\theta}^{(h)}_i),
$$ 
we obtain from Lemma~\ref{lem:a1}, when applied to $J=\{1,\ldots,i\}$ 
and $\tilde{\theta'}^{(h)},\tilde{\theta}^{(i)}\in M_J$,
that $ p(\check{\theta'}^{(h)})\le  p(\tilde{\theta}^{(i)})$
and therefore $p(\theta')\le  p(\tilde{\theta}^{(i)})$. 
This shows that $p(\theta')\le  p(\tilde{\theta}^{(i)})$ for all $\theta'\in D^{(i)}_\theta$.
\end{proof}

\section{Mathematical results for Algorithm~\ref{alg_abstract_continuous}} 
\begin{lemma}\label{Lemma:Algo2}
 Let $\lambda\in\mathbb{R}^m$ such that $\lambda\leq L$. Assume that all components except one component $i$ are fixed, and we  move along direction $i$ to the next $\lambda_i^{\text{new}}>\lambda_i$, i.e. $\lambda_i^{\text{new}}=\theta_{i k_i+1}$
in the \textsl{Gird Traversal} step or $\lambda_i=\xi_i$ in the \textsl{Bisection Search} step. Further assume, that at  the parameter point $\lambda^{\text{new}}$, where $\lambda_j^\text{new}=\lambda_j$ for $j\neq i$ and $\lambda_i^{\text{new}}>\lambda_i$, we have weak uniform-consonant in direction $i$. Then, $\varphi_i^{\text{proj}}(\lambda^{\text{new}}_i)=1$ and $\lambda^{\text{new}}\leq L$.
In particular, this implies that the approximations $\lambda$ calculated by Algorithm~\ref{alg_abstract_continuous} (Grid Traversal and Bisection Search) provide component-wise lower approximations of the confidence bounds $L$.
\end{lemma}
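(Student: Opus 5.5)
The plan is to unpack the definitions, combining the hypothesis $\lambda\le L$ with weak uniform-consonance at $\lambda^{\text{new}}$ in direction $i$. The goal is to show that every $\theta'$ with $\theta'_i\le\lambda^{\text{new}}_i$ satisfies $\psi_{\theta'}=1$. By \eqref{eq:proj_cont} this gives $\varphi^{\text{proj}}_i(\lambda^{\text{new}}_i)=1$, hence $L_i\ge\lambda^{\text{new}}_i$. Since the other components are unchanged, this yields $\lambda^{\text{new}}\le L$.

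First I settle what the hypothesis delivers. Weak uniform-consonance at $\lambda^{\text{new}}$ in direction $i$ means: if $\min_{\theta'\le\lambda^{\text{new}}}\psi_{\theta'}=1$, then $\min_{\theta'\in D^{(i)}_{\lambda^{\text{new}}}}\psi_{\theta'}=1$. In the algorithm, the quantity actually checked is the conclusion itself. So I read the assumption in the lemma as $\psi_{\theta'}=1$ for all $\theta'\in D^{(i)}_{\lambda^{\text{new}}}$, i.e.\ for all $\theta'$ with $\theta'_i\le\lambda^{\text{new}}_i$ and $\theta'_j\ge\lambda_j$ for $j\ne i$. For the specific procedures of Sections~\ref{sec:cwholm} and \ref{sec:monWeightBonf} this is exactly what is verified, via Theorem~\ref{th:cwHolm} or by the explicit monotonicity argument. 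If one insists on the implication form, the premise is still available: $\psi_{\lambda^{\text{new}}}=1$ in the situations considered, and every $\theta'\le\lambda^{\text{new}}$ either has $\theta'_i\le\lambda_i$ or has some $\theta'_j<\lambda_j$.

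Second, I cover the complementary region. Take any $\theta'$ with $\theta'_i\le\lambda^{\text{new}}_i$ and $\theta'\notin D^{(i)}_{\lambda^{\text{new}}}$. Then some $j\ne i$ has $\theta'_j<\lambda_j\le L_j$. By the definition of $L_j$ as an infimum in \eqref{eq:proj_cont}, and since $\varphi^{\text{proj}}_j$ is non-increasing in its argument, we get $\varphi^{\text{proj}}_j(\theta'_j)=1$, and in particular $\psi_{\theta'}=1$. Together with the first step, all $\theta'$ with $\theta'_i\le\lambda^{\text{new}}_i$ are rejected. Hence $\varphi^{\text{proj}}_i(\lambda^{\text{new}}_i)=1$ and $L_i\ge\lambda^{\text{new}}_i$.

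The main obstacle is the boundary case $\theta'_j<L_j$ versus $\theta'_j=L_j$. I need the strict inequality $\theta'_j<\lambda_j\le L_j$ to invoke the infimum and conclude $\varphi^{\text{proj}}_j(\theta'_j)=1$; this is fine because points with $\theta'_j\ge\lambda_j$ are already in $D^{(i)}$. Finally, the statement about Algorithm~\ref{alg_abstract_continuous} follows by induction over the steps. The start satisfies $\lambda=\theta_0\le L$. Each grid step or successful bisection step moves one coordinate only when weak uniform-consonance holds, so the lemma preserves $\lambda\le L$. Resetting $\lambda_i\to\theta_{i,k_i-1}$ restores a previously validated value, so it does not affect the bound.
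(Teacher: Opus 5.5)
Your argument is correct and is essentially the paper's proof. Both split $\{\theta':\theta'_i\le\lambda^{\text{new}}_i\}$ into a part inside $D^{(i)}_{\lambda^{\text{new}}}$ and a part already rejected because $\lambda\le L$. Your use of the strict inequality $\theta'_j<\lambda_j$ for the second part is slightly cleaner than the paper's claim $\varphi^{\text{proj}}_j(\lambda_j)=1$, which need not hold when $\lambda_j=L_j$ and the infimum in \eqref{eq:proj_cont} is attained.

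Only your aside on the implication form is wrong: points with $\theta'_j=\lambda_j$ for all $j\neq i$ and $\lambda_i<\theta'_i<\lambda^{\text{new}}_i$ are covered by neither case. The aside is not needed, though, because the paper means by weak uniform-consonance ``in direction $i$'' exactly that \eqref{eq:D_cons} holds, which is the reading your main argument uses.
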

\begin{proof}
    By assumption, we have $\varphi_j^{\text{proj}}(\lambda_j)=\min_{\theta':\theta_j'\leq\lambda_j}\psi_{\theta'}=1$
    for all components $j=1,\dots,m$. Thus, $\psi_{\theta'}=1$ for all $\theta'\in H(\lambda):=\cup_{j=1}^m\{\theta'\in\Theta:\theta_j'\leq\lambda_j\}$. Hence, to show that $\lambda_i^{\text{new}}\leq L_i$, it remains to guarantee that $\psi_{\theta'}=1$ for all $\theta'\in\{\theta':\theta_i'\leq\lambda_i^{\text{new}}\}\setminus H(\lambda)$. The latter set can be written as $\{\theta'\in\Theta: \lambda_i <\theta_i'\leq\lambda_i^{\text{new}}\text{ and } \theta_j'>\theta_j\text{ for all } j\neq i\}$ and is contained in the set $D_{\lambda^{\text{new}}}^{(i)}$ from (\ref{eq:D_cons}). Therefore, weakly 
    uniform-consonance at $\lambda^{\text{new}}$ in direction $i$ implies $\varphi_i^{\text{proj}}(\lambda_i)=1$ and thereby $\lambda_i^{\text{new}}\leq L_i$. Because all other components $j\neq i$ remained unchanged, we get $\lambda^{\text{new}}\leq L$.
\end{proof} 

\section{Mathematical results for Algorithm~\ref{algorithm:Hommel}}

In this section, we show the convergence of Algorithm~\ref{algorithm:Hommel} and additional properties stated in Theorem~\ref{theoremConvergencHommelInDetail} below. Recall the two key equations, to be solved in each step of Algorithm~\ref{algorithm:Hommel} for the given component $i=1,\dots,m$:
\begin{align}
    p_{i}(\lambda^{\text{new}}_i)=w_i(\lambda_1,\dots,\lambda_{i-1},\lambda^{\text{new}}_i,\lambda_{i+1}\dots,\lambda_m)\alpha \label{eq:HommelAlgoStepLower}
\end{align}
and
\begin{align}
    p_{i}(\nu^{\text{new}}_i)=w_i(\nu_1,\dots,\nu_{i-1},\nu_i^{\text{new}},\nu_{i+1},\dots,\nu_m)(\alpha+\delta).\label{eq:HommelAlgoStepUpper}
\end{align}
One key result of Theorem~\ref{theoremConvergencHommelInDetail}, which is utilized in its proof, is that the vector of lower confidence bounds $L$ equals the unique $\vartheta\in\mathbb{R}^m$ that satisfies
\begin{align}
    p_i(\vartheta_i)=\alpha\cdot w_i(\vartheta), \quad\text{for all $1\leq i\leq m$}~\label{uniquePropertyHommel}.
\end{align}

\begin{theorem}\label{theoremConvergencHommelInDetail}
    Under the assumptions in Section~\ref{sec:monWeightBonf}, the in each step of Algorithm~\ref{algorithm:Hommel} determined  lower and upper approximations, $\lambda$ and $\nu$, for $L$ have the following properties:  
    \begin{itemize}
        \item[(a)] The sequences of $\lambda$ and $\nu$ are component-wise non-decreasing and component-wise non-increasing, respectively, and in each step we get $\lambda\leq L\leq\nu\in C$.
        \item[(b)] The sequences of lower approximations $\lambda$ and upper approximations $\nu$ converge to parameter points $\lambda_{\infty}$ and $\nu_{\infty}$ that both fulfil property (\ref{uniquePropertyHommel}).
        \item[(c)] There exists at most one $\vartheta\in\mathbb{R}^m$ that fulfils property (\ref{uniquePropertyHommel});
        \item[(d)] The approximations $\lambda$ and $\nu$ converge independently of the starting vectors ($\lambda_0$ and $\nu_0$) to $L$.
    \end{itemize}
\end{theorem}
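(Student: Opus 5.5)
We prove (a)–(d) in order. Throughout we use the componentwise maps $F_i(\vartheta):=p_i(\vartheta_i)/w_i(\vartheta)$. Each $F_i$ is continuous, increasing in $\vartheta_i$ and non-increasing in $\vartheta_j$ for $j\neq i$. We also use the weak uniform-consonance shown in Section~\ref{sec:monWeightBonf}.

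\textbf{Step (a), induction over the updates.} Suppose $\lambda\le L\le\nu$. Fix $i$ and consider $\vartheta_i\mapsto F_i(\lambda_1,\dots,\vartheta_i,\dots,\lambda_m)$. It is increasing and continuous. By the induction hypothesis (initially by \eqref{eq:initA3}) it is $\le\alpha$ at $\vartheta_i=\lambda_i$. It exceeds $\alpha$ for large $\vartheta_i$, because $\lim p_i>\alpha$ and $w_i\le 1$; if the weights are not normalised, some boundedness of $w_i$ is needed here. Hence a unique root $\lambda_i^{\text{new}}\ge\lambda_i$ exists. At the point $\tilde\lambda^{(i)}$ we have $F_i(\tilde\lambda^{(i)})=\alpha$, so weak uniform-consonance holds there in direction $i$. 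Lemma~\ref{Lemma:Algo2} then gives $\tilde\lambda^{(i)}\le L$. The invariant $F_j(\lambda)\le\alpha$ for $j\neq i$ survives the update, because $F_j$ is non-increasing in $\lambda_i$.

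For the upper side we argue symmetrically. The invariant is $F_j(\nu)\ge\alpha+\delta$ for all $j$. The function $\vartheta_i\mapsto F_i(\dots,\vartheta_i,\dots)$ is $\ge\alpha+\delta$ at $\nu_i$ and tends to $0$ as $\vartheta_i\to-\infty$ by (ii), so $\nu_i^{\text{new}}\le\nu_i$ exists. Shrinking $\delta$ to $\delta_0^k$ preserves the invariant, and decreasing $\nu_i$ raises $F_j$ for $j\ne i$. Since $F_j(\nu)\ge\alpha+\delta>\alpha$ for all $j$, we get $p(\nu)>\alpha$, i.e.\ $\nu\in C$. By the definition of $L$ as the projection of $C$, this gives $L\le\nu$.

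\textbf{Step (b), limits.} The sequences are monotone and bounded, by $L$ and by $\lambda_0$ respectively, so they converge to $\lambda_\infty$ and $\nu_\infty$. Continuity lets us pass to the limit in \eqref{eq:HommelAlgoStepLower} and \eqref{eq:HommelAlgoStepUpper}. The difference between the arguments $\tilde\lambda^{(i)}$ and $\lambda$ tends to $0$, and $\delta\to0$, so both limits satisfy \eqref{uniquePropertyHommel}. One subtlety is that $\lambda_\infty$ must be finite; this follows from $\lambda\le L\le\nu_0$.

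\textbf{Step (c), uniqueness — the main obstacle.} Let $\vartheta$ and $\eta$ both satisfy \eqref{uniquePropertyHommel}, and suppose $\eta\not\le\vartheta$. My first attempt is a lattice argument. Set $\zeta=\vartheta\vee\eta$ and choose $i$ with $\eta_i>\vartheta_i$, so $\zeta_i=\eta_i$ and $\zeta_j\ge\eta_j$ for all $j$. Then $F_i(\zeta)\le F_i(\eta)=\alpha$ by monotonicity in the other coordinates, and similarly $F_j(\zeta)\le\alpha$ wherever $\zeta_j=\vartheta_j$. This alone gives no contradiction. The cleaner route uses the projection. Every solution $\vartheta$ of \eqref{uniquePropertyHommel} satisfies $\vartheta\le L$: weak uniform-consonance holds at $\vartheta$ in every direction, since $F_i(\vartheta)=\alpha$, and this yields $\varphi_i^{\text{proj}}(\vartheta_i)=1$. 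It then suffices to show that any solution satisfies $\vartheta\ge L$. For that I would run the upper iteration started from an arbitrary valid $\nu_0$ large enough that $\nu_0\ge\vartheta$. The invariant $\nu\ge\vartheta$ is preserved: at the $i$-th update, $F_i(\nu_1,\dots,\vartheta_i,\dots)\le F_i(\vartheta)=\alpha<\alpha+\delta$, so the root satisfies $\nu_i^{\text{new}}\ge\vartheta_i$. Hence $\nu_\infty\ge\vartheta$. The same invariant argument for $\lambda$, using $\lambda\le\vartheta$ (which holds because $F_i(\lambda\text{ with }\vartheta_i)\ge\alpha$), gives $\lambda_\infty\le\vartheta$. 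This sandwiches every solution between $\lambda_\infty\le L\le\nu_\infty$. The missing piece is to show $\lambda_\infty=\nu_\infty$; proving that is precisely uniqueness, so I expect this step to require additional structure. I would try to exploit that $\nu_\infty\in\bar C$ while $\lambda_\infty\le L$ bounds from below. Alternatively, I would use strict increase of $p_i$ together with the monotone structure to show that the set of solutions is a chain with a minimal and a maximal element, and that the maximal element $\nu_\infty$ has its region below rejected, forcing $\nu_\infty\le L$.

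\textbf{Step (d).} Once (c) holds, $\lambda_\infty=\nu_\infty$ by (b). Combined with $\lambda\le L\le\nu$ from (a), both sequences converge to $L$, irrespective of the starting vectors.
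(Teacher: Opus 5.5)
Your parts (a) and (b), and the deduction of (d) from (c), are sound and follow the paper's proof closely. Part (c), however, is not proved, so (d) is not established either. As you note yourself, your second route only reduces uniqueness to $\lambda_\infty=\nu_\infty$, which is uniqueness again. Its first step is also not justified as written. Weak uniform-consonance at a solution $\vartheta$ in every direction does not give $\varphi_i^{\text{proj}}(\vartheta_i)=1$, because the sets $D^{(i)}_\vartheta$ miss every point with two or more coordinates below $\vartheta$. Lemma~\ref{Lemma:Algo2} needs a predecessor already known to lie below $L$.

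The missing idea is the normalization $\sum_{i=1}^m w_i(\theta)=1$. Section~\ref{sec:monWeightBonf} inherits it from the discrete case (``like in the discrete case''), and you only allude to it in (a). Summing \eqref{uniquePropertyHommel} over $i$ gives $\sum_{i=1}^m p_i(\vartheta_i)=\alpha$ for every solution $\vartheta$. Because each $p_i$ is strictly increasing, the map $\vartheta\mapsto\sum_i p_i(\vartheta_i)$ is strictly increasing in the componentwise order.

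With this, your first (lattice) attempt already closes the argument. You essentially showed that $\zeta=\vartheta\vee\eta$ satisfies $p_j(\zeta_j)\le\alpha\,w_j(\zeta)$ for every $j$, so summing gives $\sum_j p_j(\zeta_j)\le\alpha$. Since $\vartheta\ne\eta$, the join $\zeta$ differs from at least one of them, say $\vartheta$. Then $\zeta\ge\vartheta$ together with strict monotonicity gives $\sum_j p_j(\zeta_j)>\sum_j p_j(\vartheta_j)=\alpha$, a contradiction.

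This is the paper's proof, with one difference. The paper first runs the lower iteration of Algorithm~\ref{algorithm:Hommel} from $\tau=\vartheta\vee\vartheta'$ to a solution $\tau_\infty\ge\tau$ and compares sums there; summing directly at the join makes that detour unnecessary.

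Without the normalization, monotonicity and continuity alone cannot give uniqueness. For $m=2$ the level sets $\{F_1=\alpha\}$ and $\{F_2=\alpha\}$ are just two non-decreasing curves, and these can share a whole segment. That is why your order-theoretic attempts could not produce a contradiction.
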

\subsection*{Proof of (a) in Theorem \ref{theoremConvergencHommelInDetail}}
%\begin{proof}
We verify at first that the sequence of approximations $\lambda$ of $L$ exists and is component-wise non-decreasing. 
For this, assume that %for all $j=1,\dots,m$ it holds
\begin{align}
    p_j(\lambda_j)\leq w_j(\lambda)\alpha\quad \text{ for all }j=1,\dots,m.\label{eq:algo3Uneq}
\end{align}
We fix all components of $\lambda$ except one, namely $\lambda_i$. When searching now for a $\lambda_i^{\text{new}}$ such that (\ref{eq:HommelAlgoStepLower}) is met, only the component in which the weight $w_i$ is non-increasing is varied. Since the p-value $p_i(\vartheta_i)$ is increasing in $\vartheta_i\in \mathbb{R}$, the solution $\lambda_i^{\text{new}}$ of (\ref{eq:HommelAlgoStepLower}) must be greater than or equal to $\lambda_i$. The continuity of the p-value and weight, as well as the assumed limit property of the p-value, guarantees the existence and uniqueness of the intersection point $\lambda_i^{\text{new}}\geq \lambda_i$ fulfilling (\ref{eq:HommelAlgoStepLower}).
In particular, we have $\lambda^{\text{new}}\geq \lambda$ component-wise. For all components $j\neq i$ we have $p_j(\lambda_j^{\text{new}})=p_j(\lambda_j)$ because $\lambda_j^{\text{new}}=\lambda_j$; and 
$w_j(\lambda)\alpha\leq w_j(\lambda^{\text{new}})\alpha$ because the weight $w_j$ is non-decreasing in the i-th component. Thus, for the updated $\lambda\rightarrow\lambda^{\text{new}}$ we have (\ref{eq:algo3Uneq}) 
for all components (including $i$). Because the start vector $\lambda_0$ fulfils (\ref{eq:algo3Uneq}), it follows by induction that the sequence of lower bounds $\lambda$ is non-decreasing. Note that we do not need the assumption
$\lambda_0\le L$ for showing that the sequence $\lambda$ is non-decreasing.

It remains to argue that in each step we have $\lambda\leq L$. For the start vector $\lambda_0$ this follows from the assumptions on the initialization algorithm. For all further steps, the inequality $\lambda\leq L$ follows from the weak consonance property argued in Section~\ref{sec:monWeightBonf} and the same arguments as in the proof of Lemma~\ref{Lemma:Algo2}. 

We now prove part (a) for the sequence of upper approximations $\nu$. For this, assume that for the current $\delta\in (0,\delta_0)$ we have
\begin{align}
    p_j(\nu_j)\geq w_j(\nu)(\alpha+\delta)\quad\text{ for all }j=1,\dots,m \label{eq:algo3UneqUpper}.
\end{align}
We fix all components of $\nu$ except one, say $\nu_i$. The argumentation is similar to the one above for the lower approximations. Due to the monotonicity properties of the p-value and weight, a potential $\nu_i^{\text{new}}$ fulfilling (\ref{eq:HommelAlgoStepUpper})  must be smaller than or equal to $\nu_i$. The continuity of the p-value and the weight as well as the limiting properties of the p-value yield the existence of a unique $\nu_i^{\text{new}}\leq \nu_i$ fulfilling (\ref{eq:HommelAlgoStepUpper}) .
This implies $\nu^{\text{new}}\leq\nu$. 
The identity $\nu_j^{\text{new}}=\nu_j$ for all $j\neq i$ implies $p_j(\nu_j^{\text{new}})=p_j(\nu_j)$ and, because each weight $w_j$ is non-decreasing in the $i$-th component, we obtain $w_j(\nu)(\alpha+\delta) \geq w_j(\nu^{\text{new}})(\alpha+\delta)$
for all $j\not=i$. Thus, for the updated $\nu\rightarrow\nu^{\text{new}}$, in particular, after decreasing $\delta$, we obtain for all components ($i$ included) that (\ref{eq:algo3UneqUpper}) is fulfilled. Because the start vector $\nu_0$ fulfils (\ref{eq:algo3UneqUpper}) it follows inductively that the sequence of $\nu$ is non-increasing.

It remains to argue that in each step of the algorithm, we have $\nu\in C$ implying $\nu\geq L$. Since in each step, inequality  (\ref{eq:algo3UneqUpper}) is satisfied for all components $j$, we have $\delta>0$ and the weights are positive, we obtain $p_j(\nu_j)/w_j(\nu)>\alpha$ for all $1\leq j\leq m$,  implying 

$p(\nu)=\min_{j=1}^m p_j(\nu_j)/w_j(\nu)>\alpha$. Thus, $\nu\in C$ implying $\nu\geq L$, which proves the statement.
%\end{proof}

\subsection*{Proof of (b) in Theorem \ref{theoremConvergencHommelInDetail}}
%\begin{proof}
The convergence of the sequences follows from their monotonicity and boundedness shown in part (a) of Theorem \ref{theoremConvergencHommelInDetail}.  
%The sequence of lower approximations $\lambda$ is non-decreasing component-wise and bounded from above by $p_i^{-1}(\alpha)<\infty$ for $1\leq i\leq m$. The sequence of upper approximations $\nu$ is non-increasing component-wise and bounded from below by $L$. This yields the convergence. 
We denote the limit points by $\lambda_{\infty}$ and $\nu_{\infty}$ and show that they meet equation (\ref{uniquePropertyHommel}). Note that the convergence against $\lambda_{\infty}$ and $\nu_{\infty}$ also applies for each subsequence of $\lambda$ and $\nu$.
%Given the convergence, it suffices to show that for each $\lambda_i$ and $\nu_i$ a subsequence converges to $\lambda_{\infty,i}$ and $\nu_{\infty,i}$, respectively. 
We consider for each component $i=1,\dots,m$ only the steps in which equation (\ref{eq:HommelAlgoStepLower}) is satisfied, i.e.\ the steps 
$k_{\ell}=i+\ell\cdot m$, $\ell\in\{0,1,2,\ldots\}$, of the algorithm. Given the convergence of the subsequence as well as the continuity of the p-values and weights, and the fact that $\delta$ of each step converge to $0$, we obtain that  (\ref{uniquePropertyHommel}) is fulfilled for $\lambda_{\infty}$ and $\nu_{\infty}$ in components $i=1,\dots,m$.

We finally note that by \eqref{eq:algo3Uneq}, for each component $\lambda_i$ of the sequence $\lambda$, constructed by Algorithm 3, we obtain $p_i(\lambda_i)\le \alpha$ and therefore $\lambda_i\le p^{-1}_i(\alpha)$. Hence, the convergence of this sequence to $\lambda_\infty$ that satisfies 
\eqref{uniquePropertyHommel} is true also without the assumption that $\lambda_0\le L$ for the stating point (which is required to deduce $\lambda\le L$ for all parameter points of the sequence).
%\end{proof}

\subsection*{Proof of (c) in Theorem \ref{theoremConvergencHommelInDetail}}
%\begin{proof}
    Let $\vartheta\in\mathbb{R}^m$ such that property (\ref{uniquePropertyHommel}), is met. By summing up both sides of this equation over all $1\leq i\leq m$, we obtain:
\begin{align}
    \sum_{i=1}^m p_i(\vartheta_i)=\alpha\cdot\sum_{i=1}^mw_i(\vartheta)=\alpha.\label{forProofHommel1}
\end{align}
Assume now that there exists another $\vartheta'\in\mathbb{R}^m$ with $\vartheta'\neq\vartheta$ such that property (\ref{uniquePropertyHommel}) and thus (\ref{forProofHommel1}) are met and show that this assumption leads to a contradiction. 

To this end we define $\tau=(\tau_1,\dots,\tau_m)$ by 
$\tau_i:=\max\{\vartheta_i, \vartheta'_i\}$. We consider an arbitrary component $i$ and assume w.l.o.g.\ that $\tau_i=\vartheta_i$. Then $w_i(\vartheta)\leq w_i(\tau)$ because $\vartheta_j\leq\tau_j$ 
and $w_i$ is non-decreasing in all components $j\neq i$. With this and our assumption that $\vartheta$ meets property (\ref{uniquePropertyHommel}) we obtain
\begin{align}
    p_i(\tau_i)=p_i(\vartheta_i)=\alpha\cdot w_i(\vartheta)\leq\alpha\cdot w_i(\tau)~.\label{forProofHommel2}
\end{align}
The same argument applies to all $\tau_j$, $j=1,\ldots,m$, with $\vartheta'_j$ instead of $\vartheta_j$ if 
$\tau_j=\vartheta'_j$. Hence, $\tau$ meets the first condition in \eqref{eq:initA3} of a starting point for Algorithm~\ref{algorithm:Hommel}. Since the proof of the convergence properties of the sequence $\lambda$ for parts (a) and (b) of Theorem~\ref{theoremConvergencHommelInDetail} do not require the property $\lambda_0\leq L$, but only the property that $p_i(\lambda_{i})\leq \alpha\cdot w_i(\lambda)$ for all $1\leq i\leq m$, we can conclude convergence of the sequence $\lambda$ which is obtained by Algorithm~\ref{algorithm:Hommel} with starting point $\lambda_{0}:=\tau$, namely to some $\tau_{\infty}\in\mathbb{R}^m$ with $\tau_{\infty}\geq\lambda_0=\tau$. From (a) and (b) of Theorem~\ref{theoremConvergencHommelInDetail}, we get that $\tau_{\infty}$ meets (\ref{uniquePropertyHommel}) and thus $\sum_{i=1}^m p_i(\tau_{i,\infty})=\alpha$.

Because $\vartheta\neq\vartheta'$ and $\vartheta, \vartheta'\leq\tau$ there exists at least one index $j$ such that $\vartheta_j<\tau_j$ or $\vartheta'_j<\tau_j$, w.l.o.g.\ $\vartheta_j<\tau_j$. Since $\lambda\ge \tau$ for all points $\lambda$ of the
non-decreasing sequence obtained from Algorithm 3 with starting point $\tau$, we obtain $\vartheta\le \tau\le \tau_\infty$ and $\vartheta_j< \tau_j\le \tau_{j,\infty}$. Because the local p-values are assumed to be strictly increasing in their parameter, 
we end up with the following contradictory statement:
\begin{align*}
    \alpha=\sum_{i=1}^mp_i(\vartheta_i) < \sum_{i=1}^mp_i(\tau_{i,\infty})=\alpha,
\end{align*}
whereby the first identity follows from \eqref{forProofHommel1} and the second one from statement (b) of Theorem~\ref{theoremConvergencHommelInDetail}.

In summary, there can exist at most one $\vartheta\in\mathbb{R}^m$ such that property (\ref{uniquePropertyHommel}) is fulfilled for the given $\alpha$.
%\end{proof}

\subsection*{Proof of (d) in Theorem \ref{theoremConvergencHommelInDetail}}
%\begin{proof}
The statement follows directly from (a) to (c) of Theorem~\ref{theoremConvergencHommelInDetail} together with the assumption of existing starting vectors $\lambda_0\leq L$ and $L\leq\nu_0\in C$.
%\end{proof}

\end{document}